\documentclass[journal,12pt,onecolumn]{IEEEtran}
\ifCLASSINFOpdf
  \usepackage[pdftex]{graphicx}
\else
\fi
%
%

%
\usepackage{amsmath}
\ifCLASSOPTIONcompsoc
 \usepackage[caption=false,font=normalsize,labelfont=sf,textfont=sf]{subfig}
\else
 \usepackage[caption=false,font=footnotesize]{subfig}
\fi
\hyphenation{op-tical net-works semi-conduc-tor}


 \usepackage{amsmath,amssymb,amsthm}
 \usepackage{graphicx}
 \usepackage{xspace}

 \newtheorem{theorem}{Theorem}
\newtheorem{corollary}[theorem]{Corollary}
\newtheorem{definition}[theorem]{Definition}
\newtheorem{example}[theorem]{Example}
\newtheorem{lemma}[theorem]{Lemma}
 \newtheorem{remark}[theorem]{Remark}
 \newtheorem{proposition}[theorem]{Proposition}

\renewcommand{\eqref}[1]{(\ref{#1})}

\newcommand{\sib}{\sigma_{B}}

\newcommand{\E}{\mathbb{E}}
\newcommand{\Ep}{\E_{P}}
\newcommand{\Eq}{\E_{Q}}

\newcommand{\ti}[1]{\tilde{#1}} 


\newcommand{\var}{\mathrm{var}}
\newcommand{\Vp}{\var_{P}}
\newcommand{\Vq}{\var_{Q}}


\newcommand{\eg}{e.g.,\xspace}

\newcommand{\etc}{etc.\@\xspace}
\newcommand{\ie}{i.e.,\xspace}

\newcommand{\R}[2]{R(#1\|#2)} 

\newcommand{\go}[3]{\Xi(#1\|#2;#3)} 

\newcommand{\Up}[2]{U_{+}(#1;#2)}
\newcommand{\Um}[2]{U_{-}(#1;#2)}
\newcommand{\Upm}[2]{U_{\pm}(#1;#2)}

\newcommand{\bias}[1][f]{\Eq[#1]-\Ep[#1]} 

\newcommand{\Phip}[1]{\Phi(#1)} 
\newcommand{\Mp}[1]{M_{P}(#1)} 
\newcommand{\Ptil}[1]{P^{c_{#1}}} 

\newcommand{\Cp}[1]{\log \Mp{#1}} 

\newcommand{\efam}{\mathcal{Q}_{\eta}}

\newcommand{\Fp}{\mathcal{F}_{P}} 

\newcommand{\nn}[1]{#1_{1},\ldots,#1_{n}}

\newcommand{\sepeq}[1]{
\medskip
\noindent 
{\bf #1} 
}

\usepackage{todonotes} 

\usepackage[inline]{enumitem}

\begin{document}
%
\title{\LARGE How biased is  your model? \\Concentration Inequalities,
    Information
     and Model Bias}
%
%
%
\author{Konstantinos~Gourgoulias,
        Markos A.~
    Katsoulakis,
        Luc~Rey-Bellet
        and~Jie~Wang
\thanks{K. Gourgoulias is with Babylon Health, London, UK. He contributed to this work while he was at the Department of Mathematics and Statistics, University of Massachusetts Amherst, Amherst, MA, 01003,  USA (e-mail: kostis.gourgoulias@babylonhealth.com) }
\thanks{M. Katsoulakis is with the Department
of Mathematics and Statistics, University of Massachusetts Amherst, Amherst,
MA, 01003, USA (e-mail:markos@math.umass.edu) }
\thanks{ L. Rey-Bellet is with the Department
of Mathematics and Statistics, University of Massachusetts Amherst, Amherst,
MA, 01003, USA (e-mail:luc@math.umass.edu) }
\thanks{ J. Wang is with the Department
of Mathematics and Statistics, University of Massachusetts Amherst, Amherst,
MA, 01003, USA (e-mail:wang@math.umass.edu) }
}

\maketitle

\begin{abstract}
We derive tight and computable bounds on the bias of statistical estimators, or more generally of quantities of interest,  when evaluated on a baseline model $P$ rather than on the typically unknown  true model $Q$. Our proposed method  combines  the  scalable information inequality derived by P. Dupuis, K.Chowdhary, the authors and their collaborators together with classical concentration inequalities (such as Bennett's and Hoeffding-Azuma inequalities). Our bounds are expressed in terms of the Kullback-Leibler divergence $\R QP$ of model $Q$ with respect to $P$ and the moment generating function for the statistical estimator under $P$. Furthermore, concentration inequalities, i.e. bounds on moment generating functions,  provide tight and computationally inexpensive    model bias bounds for quantities of interest. Finally, they allow us to derive rigorous  confidence bands for statistical estimators that account for  model bias and are valid for an arbitrary amount of data.
\end{abstract}

\begin{IEEEkeywords}
Uncertainty quantification, information theory, information bounds, model bias, model uncertainty, goal-oriented divergence, concentration inequalities, Kullback-Leibler divergence, statistical estimators
\end{IEEEkeywords}

%
\IEEEpeerreviewmaketitle

\section{Introduction}
%
%
%
%

 

\label{sec:intro}
An essential ingredient of predictive modeling is the reliable calculation  of specific
statistics/quantities of interest  of the predictive distribution. Such
statistics are typically tied to the application domain, for instance moments,
covariance, failure probabilities, extreme events, arrival times, average velocity, energy
and so on. Predictive models can involve (a) statistical aspects or data
collection, and (b) physical/mathematical mechanisms with choices in
complexity/resolution, some of them potentially computationally intractable.
Therefore, to improve the predictive capabilities of models we face  fundamental trade-offs
between model complexity,  amount of available data,  computational efficiency, and  model bias.  

The main focus of the paper is the understanding and control of model bias which often inevitably occurs in model building and which is itself a measure  of reliable predictions. 
Our primary tool are information-theoretic Uncertainty Quantification  methods.
Uncertainty quantification (UQ) methods address questions related to model selection, model sensitivity, model reduction and misspecification, \cite{Dakota,Smith,Saltelli:global}. 
Sources of uncertainty are broadly classified in two categories: aleatoric, due to the inherent stochasticity of probabilistic models and the limited availability of data, and epistemic, stemming from the inability to accurately model all aspects of a complex system, \cite{dupuis2011uq, Smith, Sullivan2015}.
%
Model bias is closely related to epistemic uncertainty, and probability metrics (Wasserstein, total variation) and divergences (Kullback-Leibler, Renyi, $\chi^2$) \cite{tsybakov} are important tools to quantify uncertainty by comparing  models.
%
Among the divergences, the Kullback-Leibler (KL) divergence (also known as  relative entropy) is widely used 
because of its computational tractability.
Specifically, KL-based methods have been used successfully in variational inference and expectation propagation~\cite{bishop}, model selection~\cite{burnham}, model reduction (coarse-graining)~\cite{Shell, KP2013, Noid,  majdabook}, optimal experiment design, \cite{Atkinson2007OptimumExperimentalDesignsWithSAS.}, and UQ~\cite{majdauncertainty,majdafidelity,Komorowski,PK2013}.

Information-theoretic methods
for model building
will typically induce bias for  the various statistics and the QoIs of the predictive distribution compared to the ``true'' model--if known--or the available data.
Managing the corresponding trade-offs between a range of less biased but more computationally expensive models naturally leads to the following main question 
for the paper :
\begin{center}
  \textit{Can we provide performance guarantees for model bias in models built
    via KL-based approximate inference, model misspecification, or model
    selection methods?}
\end{center}

In this paper we ultimately seek to understand how a decrease in KL-divergence--associated with an increase in modeling and/or computational effort--can
guarantee a model bias tolerance; and in addition, we seek the tightest possible control of model bias.
Note that bounds on the model bias of a QoI between two distributions $P$ and $Q$
can be obtained, for example, in terms of their KL or $\chi^2$ divergences using the classical Pinsker or Chapman-Robbins inequalities respectively, \cite{Cover2006ElementsofInformationTheory,tsybakov}. Clearly a decrease in divergence will improve bounds on the model bias.
However, these classical inequalities are 
typically \textit{non-tight} and
\textit{non-discriminating}, in the sense that they scale poorly with the size of data sets,  with the number of variables in high-dimensional models (e.g. molecular systems), or with time in the context of stochastic  processes; we refer to  Sections~2.2--2.3 in \cite{katsoulakisjcp2017} for a complete discussion, see also the  example in Remark~\ref{rem:poor:scalab}.


To tackle these challenges a class of new information  inequalities have been introduced by Paul Dupuis in \cite{dupuis2011uq} and  further developed in \cite{katsoulakisuq16,katsoulakisjcp2017}
by the authors and their collaborators. 
The resulting bounds on model bias bounds involve (a) the KL divergence $\R QP$
between a baseline model $P$ and  an alternative  models $Q$, and (b) the moment generating function (MGF) for the QoI under  the baseline model  $P$. This inequality inherits the asymmetry of $\R QP$, which in turn allows us to exchange the roles of $P$ and $Q$,  depending on the context and/or availability of data from either $P$ or $Q$. 
Considering a neighborhood of models around the baseline $P$, defined by the KL divergence $\R QP$,  can be associated with a specified error tolerance and is non-parametric in nature.
The crucial mathematical ingredient behind the inequality is the Donsker-Varadhan variational principle~\cite[Appendix C.]{dupuisellis} for the KL divergence, also known as the  Gibbs variational
formula~\cite{simon2014statistical}.
This variational representation actually implies that 
the new inequalities are \textit{tight}, i.e. they become an equality for a suitable model $Q$ within a  KL divergence neighborhood of the baseline model $P$.
Furthermore, the dependence on   the MGF renders the bounds scalable and \textit{discriminating} for high-dimensional data sets and models, e.g. Markov Random Fields, long-time dynamics of stochastic processes and molecular models, \cite{katsoulakisjcp2017} as demonstrated recently in  \cite{katsoulakisjcp2017}.
Finally, broadly related methods  in model misspecification and sensitivity analysis in  financial risk measurement and queuing theory, using a robust optimization perspective, were proposed recently  in    \cite{glasserman2014} and \cite{lam2016}, we also refer to references therein for other  related work  in operations research, finance  and macroeconomics.

The primary goal of this paper is to use these new theoretical advances to develop practical tools to estimate and control model bias, and this raises new theoretical questions and implementation challenges. In particular evaluating or estimating MGFs can be very costly due to high variance of the estimators thus requiring either a large amount of data, see also Table 1, or multi-level/sequential  Monte Carlo methods~\cite{junliu,bishop,lelievre2010free,delmoral2012}. In this paper we rather pursue the use of a variety of QoI-dependent \textit{concentration  inequalities}~\cite{concentration,raginsky2013concentration,ledoux2005concentration}
to bypass the evaluation or estimation of the MGF and this leads to computable, tight bounds for model bias. 
%
%
Concentration inequalities are a fundamental mathematical tool in the study of rare events~\cite{dembo2010large}, model selection
methods~\cite{massart2007concentration}, statistical mechanics~\cite[Section
8.4]{ledoux2005concentration}, random matrix
theory~\cite{tropp2015introduction}. Usually concentration inequalities are used to bound tail events, i.e. to provide bounds on the probability that a random variable deviates from typical behavior.
In this paper we use concentration inequalities for
the purpose of uncertainty quantification,  specifically to control model bias,  by implementing efficiently the new information inequalities developed in \cite{dupuis2011uq,katsoulakisuq16,katsoulakisjcp2017}, while at the same time maintaining and expanding their theoretical advantages. 



The new inequalities proved in this paper--- which we call  \textit{concentration/information} inequalities---combine  concentration inequalities with the variational principles  underlying the bounds and  lead to model bias bounds with the following key features:  
\begin{enumerate}
\item[(a)] Easily  computable bounds  in terms of simple properties of the QoIs such
    as their mean,  
    upper and lower bounds,  suitable bounds on their variance, and so on; that is, without requiring the costly computation of MGFs. 
    \item[(b)] Scalability for  QoIs that depend on large numbers of data such as statistical estimators,  or for high dimensional probabilistic models. 
    \item[(c)]   Derivation of  rigorous  confidence bands for statistical estimators that account for  model bias and are valid for an arbitrary amount of data.   
    \item[(d)] Applicability  to families of QoIs satisfying  a concentration inequality, 
    and not to just a single QoI.
    \item[(e)] Tightness of  the model bias bounds in the sense that the bounds are always attained within a prescribed KL-divergence \textit{and} the  class of QoIs in (d). 
\end{enumerate}

The structure of the paper is as follows.  In Section~\ref{sec:outline} we set-up the  mathematical  framework for the paper and discuss 
the information inequalities for QoIs of \cite{dupuis2011uq,katsoulakisuq16,katsoulakisjcp2017}.
In Section~\ref{sec:conc-ineq} we use concentration inequalities to derive new concentration/information inequalities on model bias that are typically  straightforward to implement. In Section~\ref{sec:sharpness-and-robustness}, we discuss the  tightness properties of the new concentration/information bounds. 
Finally in Section~\ref{sec:scalability}  we study the bias of statistical estimators, noting that such QoIs will require results  that scale properly with the amount of available data.
We also illustrate the bounds in a variety of examples. In  Section~\ref{ex:simple-examples} 
we  consider two elementary examples  with bounded or unbounded QoIs. Two examples of systems with epistemic uncertainty are discussed in Section~\ref{sec:epistemic-UQ-conc}; the first one deals with failure probabilities for batteries 
and the second with  Markov Random Fields such as Ising systems.

\section{Tight Model Bias Bounds using KL Divergence}
\label{sec:outline}

In coarse-graining, model reduction, 
model selection, or variational inference,
as well as in other uncertainty quantification and approximate inference problems, a baseline model $P$ is compared  to a "true" or simply a different model $Q$. In this case the notion of {\em risk or mean
  square error} plays a key role in assessing the quality of the corresponding
estimators. Namely, if $\hat{f}$ is an unbiased estimator of the quantity of
interest $f$ for the baseline  model $P$ (but not of the "true" model $Q$)
then the risk of the estimator is the mean squared error 
\begin{align}
  \label{eq:risk}
  \mbox{RISK}:=\mathbb{E}_P[ (\hat{f} - \mathbb{E}_Q[f])^2]= \underbrace{{\rm
  Var}_{P}[\hat{f}] }_{{\rm Variance}}+ \underbrace{ | \mathbb{E}_P[f] -
  \mathbb{E}_Q[f]|^2}_{\rm Model Bias} \,.
\end{align}
If available computational resources can be used to control the variance of
the baseline model P, then the model bias becomes the dominant source of risk thus must be carefully controlled. The main  goal of this work is to understand how to transfer quantitative results on information metrics, specifically the KL divergence  $R(Q\|P)$ (also known as relative entropy), to {\em bounds on the bias for
  quantities of interest} $f$. We formulate the corresponding mathematical
problem next.

\medskip
\noindent{\bf Mathematical Formulation.} 
Let us consider a baseline model given 
by the probability measure $P$ on the 
state space $\mathcal{X}$ which we assume to be a Polish (i.e. complete separable metric) space and  we consider 
a QoI $f$, that is a measurable function 
$f: \mathcal{X} \to\mathbb{R}$. 
We specify next a family of alternative 
probability distributions in terms of the 
Kullback-Leibler (KL) divergence (or relative entropy) $\R Q P$, which is defined as 
\begin{align}
  \label{eq:KL}
  \R QP=\int \log \frac{dQ}{dP} dQ. 
\end{align}
if $Q$ is absolutely continuous with respect to $P$ 
(and $+\infty$ otherwise).  Note that $\R QP$ the 
properties of a divergence that is  $\R QP\geq 0$ for 
all $Q$ and  $\R QP=0$ if and only if  $Q=P$, see e.g. 
\cite{Cover2006ElementsofInformationTheory}.

We fix a positive number $\eta$ which we interpret 
as a level of \textit{model misspecification}, quantified in terms KL divergence or, alternatively, as an error tolerance level between the baseline  model $P$ and alternative models described $Q$.
We then define the set of alternative probability as 
\begin{equation}
  \label{Qfamily}
  \mathcal{Q}_\eta=\{Q: R(Q\|P) \le \eta^2\}\, .
\end{equation}
and any $Q\in\efam$ is referred to as an \textit{$\eta$-admissible model}.
We remark that our approach is 
\textit{non-parametric}, i.e. it does not
rely on any parametric form of the 
probability distributions considered. 
The relative entropy $\R QP$ is convex and 
lower-semicontinuous in $(Q,P)$. 
In general the set  $\efam$ is infinitely  dimensional, although it is compact with respect to the weak topology, \cite{dupuisellis}. 
The fact that the KL divergence is not symmetric in its arguments can be advantageous in some situations. For 
example, in variational inference, it naturally imposes 
a constraint on the support of the possible 
approximations $Q$ of a target model $P$~\cite{bishop}.

Our primary mathematical challenge in 
this work lies in quantifying the model
bias in~(\ref{eq:risk}) if we use an $\eta$-admissible model in $Q_\eta$ rather than the baseline model $P$. That is we need to 
\begin{align*}
\textrm{ Compute (or estimate) }\sup_{Q \in \efam} \{\mathbb{E}_Q[f] - \mathbb{E}_P[f] \} \quad \textrm{ and } \quad \inf_{Q \in \efam} \{\mathbb{E}_Q[f] - \mathbb{E}_P[f]\}\,.
\end{align*}
Note that this approach is intrinsically 
goal-oriented since it includes not only  a 
family of alternative models $Q$ but also  a specific choice of QoI $f$.

\medskip
\noindent{\bf Goal-oriented divergence.}
We now define a divergence which incorporates the QoI $f$ and hence is called goal-oriented; it was first introduced in the current form in~\cite{katsoulakisuq16} based on earlier work in~\cite{dupuis2011uq}.
Consider a QoI $f$ and the moment-generating function (MGF)
\begin{align}
  \label{eq:mgf}
  \Mp{c;\ti f}:= E_{P} [e^{c\ti f}]
\end{align}
of the centered QoI $\ti f$,
\begin{align}
  \label{eq:centralized}
  \ti f(x) := f(x)-\Ep[f].
\end{align}
In general (see \cite{dembo2010large} for details) 
the MGF $\Mp{c;\ti f}$ is finite for $c$ in some 
interval $I$ and equal to $+\infty$ otherwise.  
Throughout this  paper we will make the standing 
assumption that $\Mp{c;\ti f}$ is finite in the interval  $I=(d_-, d_+)$ with $d_- < 0 < d_+$, then under this assumption, $\Mp{c;\ti f}$ is $C^\infty$ in $I$ and 
$f$ has finite moments of any order.  We next define  the goal-oriented (GO) divergence as 
\begin{equation}
  \go QPf = \inf _{c>0}\left\{ \frac{1}{c}  \Cp{c;\ti f} +  \frac{1}{c} \R QP \,  \right\}.\label{eq:intro:godiv:2} 
\end{equation}
for $P,Q$ with $\R Q p < \infty$. 
Note that if $d_+$ is finite then  
the infimum can be taken on $(0,d_+)$
and note also that if  $\R Q P = \infty$ 
then the goal  oriented divergence can 
naturally be then set equal to $+\infty$.

In \cite{dupuis2011uq,katsoulakisuq16} 
the following bound on the model bias was proved, along with certain mathematical properties:
\begin{theorem}
\label{thm:GO-bounds} Let $P$ be a probability measure and let $f$ be such that its MGF $\Mp{c;\ti  f}$ is finite in a neighborhood of the origin. Then for any $Q$ with $\R Q P < \infty$ we have 
\begin{equation}
  -\go QP{-f} \leq \bias \leq \go QP{f}.
  \label{eq:intro:bias-bound-go:sec2}
\end{equation}
The GO divergence has the following properties
\begin{enumerate}
\item \label{it:go:non-negative} \textit{Divergence:}
$\Xi(Q \mid \mid P ;f ) \ge 0$ 
and $\Xi(Q \mid \mid P ;f) = 0$ if and only if either $Q=P$
  or $f$ is constant $P$-a.s.
\item \label{it:go:linearization} \textit{Linearization}: 
$$ \Xi(Q \mid \mid P;\pm f)=
\sqrt{\Vp[f]}\sqrt{2R(Q\mid \mid P)}+O(\R QP)$$ and thus
$$
\mid E_Q(f)- E_P(f)\mid \, \le \, \sqrt{\Vp[f]}\sqrt{2R(Q\mid \mid P)}+O(R(Q\mid
\mid P)).
  $$
\end{enumerate}
\end{theorem}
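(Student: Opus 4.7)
The plan is to extract everything from the Donsker--Varadhan variational representation of the KL divergence,
\[
  \log M_P(c;g) \;=\; \sup_{Q\,:\,R(Q\|P)<\infty} \bigl\{ E_Q[g] - R(Q\|P) \bigr\},
\]
which is stated for this setting in the Gibbs/Donsker--Varadhan form cited in the introduction. Applying this with $g = c\tilde f$ for $c>0$ and rearranging yields, for every admissible $Q$,
\[
  E_Q[\tilde f] \;\le\; \frac{1}{c}\log M_P(c;\tilde f) + \frac{1}{c}R(Q\|P).
\]
Since $E_Q[\tilde f] = E_Q[f] - E_P[f]$, taking the infimum over $c\in(0,d_+)$ gives the upper bound $E_Q[f] - E_P[f] \le \Xi(Q\|P;f)$. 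Replacing $f$ by $-f$ and using the same argument yields $-(E_Q[f]-E_P[f]) \le \Xi(Q\|P;-f)$, which is the lower bound in \eqref{eq:intro:bias-bound-go:sec2}.

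For the non-negativity property, note that by Jensen's inequality applied to the convex function $e^{c\tilde f}$ combined with $E_P[\tilde f]=0$, one has $M_P(c;\tilde f)\ge 1$, so $\log M_P(c;\tilde f)\ge 0$; since $R(Q\|P)\ge 0$ as well, the infimand defining $\Xi$ is non-negative, hence $\Xi(Q\|P;f)\ge 0$. For the equality case, first observe that if $Q=P$ then $R(Q\|P)=0$ and letting $c\to 0^+$ in $\tfrac{1}{c}\log M_P(c;\tilde f)$ gives zero by the Taylor expansion below; and if $f$ is $P$-a.s.\ constant then $\tilde f\equiv 0$ so $\log M_P\equiv 0$ and letting $c\to\infty$ (permissible since $d_+=\infty$ in that case) drives $\tfrac{1}{c}R(Q\|P)$ to zero. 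Conversely, if $R(Q\|P)>0$ and $f$ is not $P$-a.s.\ constant then strict Jensen gives $\log M_P(c;\tilde f)>0$ for all $c\ne 0$; the function $c\mapsto \tfrac{1}{c}[\log M_P(c;\tilde f)+R(Q\|P)]$ blows up as $c\to 0^+$ (because $R(Q\|P)>0$) and stays bounded away from zero on any compact subinterval of $(0,d_+)$ by continuity and strict positivity, so its infimum is strictly positive.

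For the linearization statement, recall that under the standing assumption that $M_P(c;\tilde f)$ is finite on a neighborhood of the origin, the cumulant expansion yields
\[
  \log M_P(c;\tilde f) \;=\; \tfrac{c^2}{2}\,\mathrm{Var}_P[f] + O(c^3) \quad \text{as } c\to 0,
\]
since $\tilde f$ has $P$-mean zero. Inserting this expansion into the definition of $\Xi$ gives
\[
  \Xi(Q\|P;f) \;=\; \inf_{c>0}\Bigl\{ \tfrac{c}{2}\mathrm{Var}_P[f] + \tfrac{1}{c}R(Q\|P) + O(c^2) \Bigr\}.
\]
The leading-order optimizer solves $\tfrac{1}{2}\mathrm{Var}_P[f] = R(Q\|P)/c^{*2}$, i.e.\ $c^* = \sqrt{2R(Q\|P)/\mathrm{Var}_P[f]}$, and plugging back in yields the value $\sqrt{\mathrm{Var}_P[f]}\sqrt{2R(Q\|P)}$, with the $O(c^{*2}) = O(R(Q\|P))$ error bound following from the remainder in the cumulant expansion. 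The same argument applied to $-f$ (whose variance equals that of $f$) gives the $\Xi(Q\|P;-f)$ asymptotics, and combining with \eqref{eq:intro:bias-bound-go:sec2} yields the bound on $|E_Q[f]-E_P[f]|$.

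The most delicate step is the ``only if'' direction of property~1 when $R(Q\|P)$ is not small: one cannot rely on the linearization and must instead argue that the infimand is a continuous, strictly positive function on $(0,d_+)$ that tends to $+\infty$ at the left endpoint, which forces a strictly positive minimum; all the other steps amount to a direct application of Donsker--Varadhan and elementary Taylor expansion.
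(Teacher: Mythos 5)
Your overall route is the one the paper intends: the paper does not reprove this theorem but attributes it to the cited references and identifies the Donsker--Varadhan (Gibbs) variational formula as the crucial ingredient, and your derivation of \eqref{eq:intro:bias-bound-go:sec2} from $E_Q[c\tilde f]\le \log M_P(c;\tilde f)+R(Q\|P)$, followed by division by $c>0$ and an infimum, is exactly that argument. The Jensen argument for $\Xi\ge 0$ and the cumulant-expansion linearization (with optimizer $c^*=\sqrt{2R(Q\|P)/\mathrm{Var}_P[f]}$) likewise match what the paper and its references do.

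There is, however, one genuine gap, precisely at the step you yourself flag as delicate. Your argument for the ``only if'' direction of property~1 is: the infimand is continuous and strictly positive on $(0,d_+)$, blows up as $c\to 0^+$, and is bounded away from zero on every compact subinterval, ``so its infimum is strictly positive.'' That inference is false in general: a function such as $c\mapsto 1/c+e^{-c}$ on $(0,\infty)$ satisfies all of these hypotheses yet has infimum $0$. You have controlled the left endpoint but not the right one; when $d_+=\infty$ the term $\tfrac{1}{c}R(Q\|P)$ vanishes as $c\to\infty$, so you must show $\tfrac{1}{c}\log M_P(c;\tilde f)$ does not also decay to $0$. The repair is the convexity argument the paper uses in its Appendix for Theorem~\ref{thm:tightness}: writing $H(c)=\log M_P(c;\tilde f)$, one has $H(0)=H'(0)=0$ and $H''>0$ for non-constant $f$, hence $c^{-1}H(c)=c^{-1}\int_0^c H'(t)\,dt$ is (strictly) increasing; therefore for any fixed $c_0\in(0,d_+)$ the infimand is bounded below by $R(Q\|P)/c_0$ on $(0,c_0]$ and by $H(c_0)/c_0>0$ on $[c_0,d_+)$, giving a strictly positive infimum. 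With that one-line fix your proof is complete and coincides in substance with the paper's (cited) proof.
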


\medskip
\noindent{\bf Tightness of goal-oriented divergence.}
Our next result complements Theorem~\ref{thm:GO-bounds}
and demonstrates the tightness of the GO divergence
bounds \eqref{eq:intro:bias-bound-go:sec2} for the bias of a  QoI $f$; for the complete proof, we refer to Appendix~\eqref{sec:Appendix:tightness:GO}. An     equivalent tightness result for the upper bound in   \eqref{it:go:tightness:thm} was first shown in \cite{dupuis2011uq}, while   here we present a new formulation  for a  complete  tightness result  in Theorem~\ref{thm:tightness},    based on the goal-oriented divergence formulation in \cite{katsoulakisuq16}.

To state our result we introduce 
the exponential family $P^{c}$ given by 
\begin{equation}
  \label{eq:parametric}
  \frac{dP^{c}}{dP} \,=\, e^{ c f - \log M_P(c; f)}\,=\,\frac{e^{c f}}{\int e^{c f}dP}
  \,,
\end{equation}
which is well-defined for $c$ in the 
interval $I=(d_-,d_+)$ where $\Mp{c;f}$ is finite.

\begin{theorem}
  \label{thm:tightness} 
Let $P$ be a probability measure and let 
$f$ be such the MGF $\Mp{c;\ti  f}$  is finite in a neighborhood of the origin.
Let $\mathcal{Q}_\eta=\{Q: R(Q\|P) \le
\eta^2\}$ be the set all  approximate 
probability $Q$ within a KL tolerance $\eta^2$. 
\begin{enumerate} 
\item There exists $0< \eta_{\pm} \le \infty$ such that for any $\eta \le \eta_{\pm}$ there are probability measures $Q^{\pm}$ such that  
\begin{equation}
\label{eq:max}
\begin{aligned}
\go {Q^+}Pf &= \E_{Q^+}[f] -\E_{P}[f] &=
                \max_{Q\in \mathcal{Q}_\eta}\E_{Q}[f]-\E_{P}[f],
\end{aligned}
\end{equation}
\begin{equation}
\label{eq:min}
\begin{aligned}
-\go {Q^{-}}P{-f}& = \E_{Q^-}[f] - \E_{P}[f] &=
                   \min_{Q\in \mathcal{Q}_\eta}\E_{Q}[f]-\E_{P}[f]
\end{aligned}
\end{equation}
The measures $Q_\pm$ are given by the elements $P^{c_\pm}$ of the exponential family \eqref{eq:parametric} where $c_\pm$  are the unique solution of $\R {P^{c_\pm}}{P} = \eta^2$. 
\item If $\eta_\pm$ is finite then $f$ is necessarily bounded above/bounded below 
$P$-almost surely with upper/lower bound $f_{\pm}$. In that case, for $\eta > \eta_{\pm}$ and any $Q$ with $\R QP = \eta^2$
we have  
\begin{equation}
 \label{eq:max2}
\begin{aligned}
\go QPf &= f_+ -\E_{P}[f]&=\sup_{Q\in \mathcal{Q}_\eta}\E_{Q}[f]-\E_{P}[f],
\end{aligned}
\end{equation}
\begin{equation}
\label{eq:min2}
\begin{aligned}
-\go QP{-f}& = f_- -\E_{P}[f]&=
                    \inf_{Q\in \mathcal{Q}_\eta}\E_{Q}[f]-\E_{P}[f]
\end{aligned}
\end{equation}
\end{enumerate}
\end{theorem}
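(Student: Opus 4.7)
My central tool is the Donsker-Varadhan (Gibbs) variational principle, which is dual to the GO divergence formula \eqref{eq:intro:godiv:2}: for every $c\in(d_-,d_+)$,
$$\Lambda(c):=\log M_P(c;\ti f) = \sup_{Q\ll P}\bigl\{c(\Eq[f]-\Ep[f]) - \R QP\bigr\},$$
with the supremum attained uniquely at $Q=P^{c}$ from \eqref{eq:parametric}. This both pins down the form of the extremal measures and ties $\eta$ to the tilt parameter $c$. I will use this to handle Part~1 via the standard calculus of exponential families, and then in Part~2 show that finiteness of $\eta_\pm$ collapses the geometry of $\mathcal{Q}_\eta$ in a way that forces $f$ to be essentially bounded.

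\textbf{Part 1.} On $(d_-,d_+)$ one has the standard identities $\Lambda'(c)=\Ex{P^{c}}{f}-\Ep[f]$, $\Lambda''(c)=\mathrm{Var}_{P^{c}}[f]\ge 0$, and, via the equality case of Donsker-Varadhan at $Q=P^{c}$, the Legendre relation
$$r(c):=\R{P^{c}}{P}=c\Lambda'(c)-\Lambda(c),\qquad r'(c)=c\Lambda''(c).$$
Hence $r(0)=0$ and $r$ is continuous and non-decreasing on $(0,d_+)$, strictly so unless $f$ is $P$-a.s.\ constant. Setting $\eta_+^{2}:=\lim_{c\uparrow d_+}r(c)\in(0,\infty]$, the intermediate value theorem yields, for each $\eta\le\eta_+$, a unique $c_+\in(0,d_+)$ with $r(c_+)=\eta^{2}$; define $Q^{+}:=P^{c_+}$. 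Substituting $Q=Q^+$ into the Legendre relation rearranges to $\Ex{Q^+}{f}-\Ep[f]=(\Lambda(c_+)+\eta^{2})/c_+$, which simultaneously identifies $c_+$ as the minimizer in the infimum defining $\go{Q^+}{P}{f}$ and proves the first equality in \eqref{eq:max}. To close the loop, apply Theorem~\ref{thm:GO-bounds} at the same $c_+$ to an arbitrary $Q\in\mathcal{Q}_\eta$:
$$\Eq[f]-\Ep[f]\le\go QPf\le\frac{\Lambda(c_+)+\R QP}{c_+}\le\frac{\Lambda(c_+)+\eta^{2}}{c_+},$$
and the right-hand side matches $\Ex{Q^+}{f}-\Ep[f]$, giving the second equality in \eqref{eq:max}. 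A symmetric argument with $c_-\in(d_-,0)$ and $Q^{-}:=P^{c_-}$ gives \eqref{eq:min}.

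\textbf{Part 2.} This part is the main technical obstacle. The crucial reduction is that $\go QPf$ depends on $Q$ only through the scalar $\R QP$, so it suffices to evaluate $\psi(\eta^{2}):=\inf_{c\in(0,d_+)}\{(\Lambda(c)+\eta^{2})/c\}$. I will first show by contrapositive that $\eta_+<\infty$ forces $f$ to be $P$-a.s.\ bounded above: if $f$ is unbounded above, the Markov-type comparison $M_P(c;f)\ge e^{cN}P(f\ge N)$ yields $P^{c}(f>M)\to 1$ as $c\uparrow d_+$ for every $M$, hence $\Lambda'(c)=\Ex{P^{c}}{f}-\Ep[f]\to\infty$; the convex-tangent bound $\Lambda(c)\le\Lambda(A)+(c-A)\Lambda'(c)$ at a fixed $A\in(0,d_+)$ then yields $r(c)\ge A\Lambda'(c)-\Lambda(A)\to\infty$, contradicting $\eta_+<\infty$. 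So there exists $f_+\in\mathbb{R}$ with $f\le f_+$ $P$-a.s. For $\eta>\eta_+$, the derivative of $c\mapsto(\Lambda(c)+\eta^{2})/c$ equals $(r(c)-\eta^{2})/c^{2}<0$ on $(0,d_+)$, so the objective is strictly decreasing and $\psi(\eta^{2})=\lim_{c\uparrow d_+}\Lambda(c)/c$. Using $\Lambda(c)/c=\int_0^1\Lambda'(sc)\,ds$, monotone convergence, and---via the same Markov-type estimate---that $\Ex{P^{c}}{f}\nearrow f_+$, this limit equals $f_+-\Ep[f]$. Combined with the trivial bound $\Eq[f]\le f_+$ for every $Q\ll P$, this establishes \eqref{eq:max2}; symmetric analysis with $c<0$ yields \eqref{eq:min2}.
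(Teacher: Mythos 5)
Your Part~1 is correct and follows essentially the same route as the paper's appendix proof: both arguments reduce to the one--dimensional calculus of $c\mapsto(\log \Mp{c;\ti f}+\eta^2)/c$, the identity $\R{P^c}{P}=c\Lambda'(c)-\Lambda(c)$ with $\Lambda(c)=\log\Mp{c;\ti f}$, and the strict monotonicity of $r(c)=c\Lambda'(c)-\Lambda(c)$, which pins down the unique minimizer as the tilt $c_\pm$ solving $\R{P^{c_\pm}}{P}=\eta^2$. Your ``close the loop'' step (evaluating the GO infimum at $c_+$ for an arbitrary $Q\in\mathcal{Q}_\eta$ and invoking Theorem~\ref{thm:GO-bounds}) is exactly how the paper passes from its lemma to the theorem, and is stated more explicitly by you than by the paper.

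There is, however, a genuine gap in Part~2. Your contrapositive hinges on the claim that $f$ unbounded above forces $\Lambda'(c)\to\infty$ as $c\uparrow d_+$, justified via ``$P^{c}(f>M)\to 1$.'' The Markov comparison $\Mp{c;f}\ge e^{c(N+1)}P(f\ge N+1)$ gives $P^{c}(f<N)\le e^{cN}/\Mp{c;f}\le e^{-c}/P(f\ge N+1)$, which tends to $0$ only as $c\to\infty$; it says nothing as $c$ approaches a \emph{finite} $d_+$. Indeed, for $dP\propto e^{-x}x^{-3}\,dx$ on $[1,\infty)$ with $f(x)=x$ one has $d_+=1$, the tilted measures $P^{c}$ converge to the heavy-tailed $P^{1}$ with finite mean, so $\Lambda'(c)$ and $r(c)$ stay bounded while $f$ is unbounded above --- with your definition $\eta_+^2:=\lim_{c\uparrow d_+}r(c)$ this contradicts the implication ``$\eta_+<\infty\Rightarrow f$ bounded above.'' Your final computation $\psi(\eta^2)=\lim_{c\uparrow d_+}\Lambda(c)/c$ also silently uses $\eta^2/c\to 0$, i.e.\ $d_+=\infty$. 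The paper disposes of $d_+<\infty$ by a separate case in which the minimizer is claimed to remain interior (so that $\eta_\pm=\infty$ there and Part~2 is never triggered); you need either that case split, or an explicit steepness hypothesis ($\Lambda'(c)\to\infty$ at a finite boundary $d_+$), to make Part~2 sound. Two smaller points: the identification of $c_+$ as the minimizer should cite the sign change of $\partial_c[(\Lambda(c)+\eta^2)/c]=(r(c)-\eta^2)/c^2$ already in Part~1, and the second equality in \eqref{eq:max2} still needs the lower bound $\sup_{Q\in\mathcal{Q}_\eta}\E_Q[f]\ge f_+$, which follows because every $P^{c}$ lies in $\mathcal{Q}_\eta$ when $\eta>\eta_+$ and $\E_{P^{c}}[f]\nearrow f_+$ --- your monotone-convergence step contains this fact but you only invoke it for the infimum.
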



The main result of the theorem provides 
performance guarantees in the sense that $\E_Q[f]$ belongs to the interval 
\begin{equation}
\label{it:go:tightness:thm}
\begin{aligned}
  -\go {Q^{-}}P{-f}+&\E_{P}[f]
  \le
  \E_{Q}[f]  
  \le & 
  \E_{P}[f]+\go{Q^+}Pf\,  \quad \mbox{for all  $Q\in \mathcal{Q}_\eta$ } 
\end{aligned}
\end{equation}
and the bounds are tight in $\mathcal{Q}_\eta$, in the sense 
that inequalities become equalities  for $Q=Q^\mp$ respectively. 
%
%
%
%
This tightness property is crucial for our discussion because it implies that the GO divergence bounds in~(\ref{eq:intro:bias-bound-go:sec2}) are the best possible in the sense that they have attainable worst-case model scenarios $Q^\pm$ among all probability distributions $Q$ within a KL tolerance
$\eta^2>0$, see the schematic in Figure~\ref{fig:tightness_GO}. 
\begin{figure}[!ht]
  \centering \includegraphics[width=0.5\textwidth]{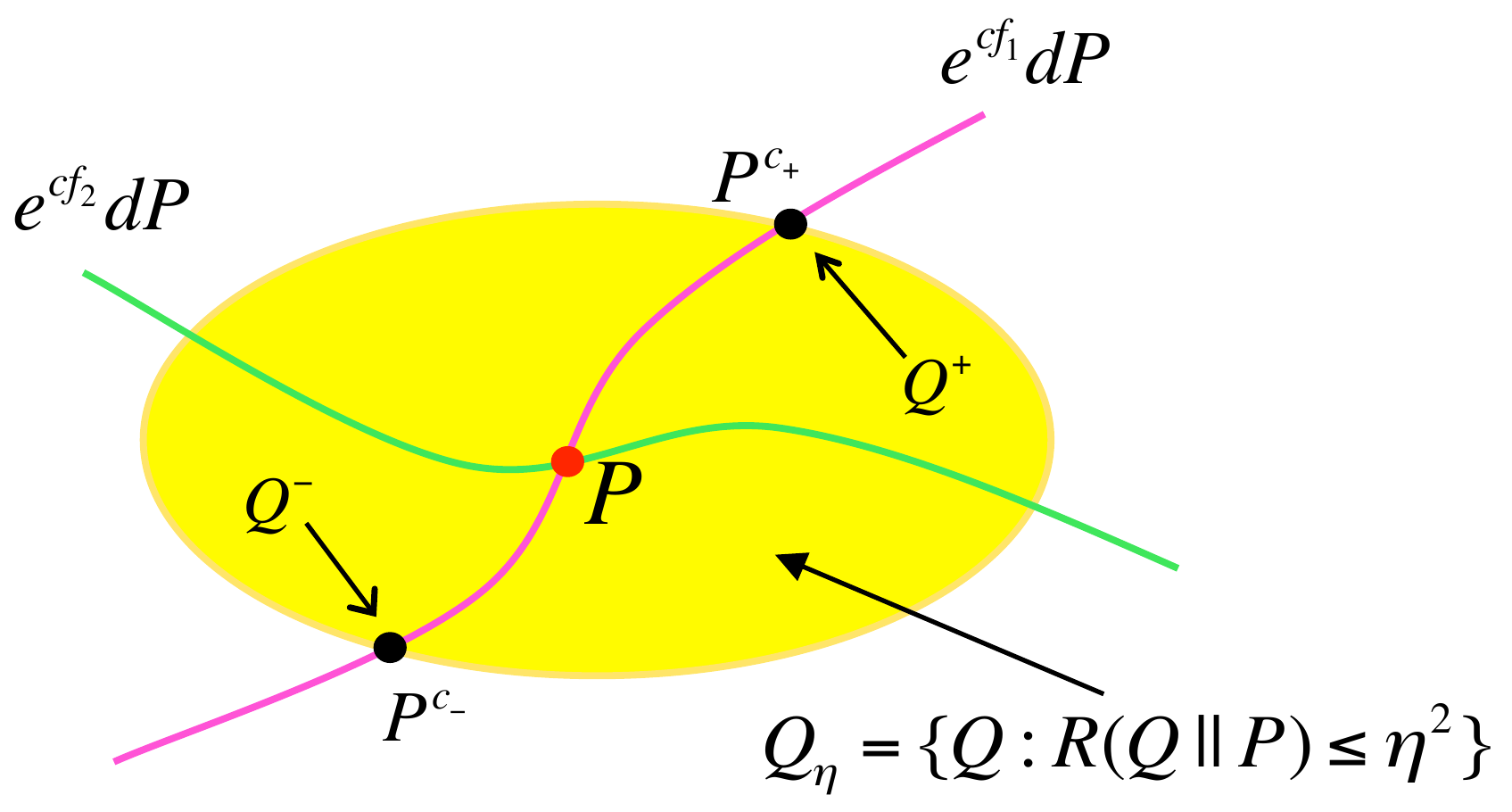}
  \caption{ The schematic depiction of Theorem~\ref{thm:tightness} for the QoIs $f_1$, $f_2$ with tolerance $\eta^2$. The solid lines depict the one-parameter tilted
    probability distributions $P^c$ corresponding to the QoIs. The theorem
    implies that the upper and lower bounds in the family
    $\mathcal{Q}_\eta=\{Q: R(Q\|P) \le \eta^2\}$ are attained at the
    probability measures $Q^{\pm}=P^{c_{\pm}}$ for the QoI $f_1$. }
  \label{fig:tightness_GO}
\end{figure}

\begin{remark}{\rm 
The tightness
property~(\ref{it:go:tightness:thm}) is a \textit{non-parametric} result: the family $\mathcal{Q}_\eta$
of all alternative models Q  cannot be parametrized in general and is only
characterized by the property $\R QP\le \eta^2$. In spite of this non-parametric
framework, we showed in Theorem~\ref{thm:tightness}
that the extremal models $Q^\pm$ that yield the tight bounds \eqref{it:go:tightness:thm}
belong to the parametrized family \eqref{eq:parametric},  see also Figure~\ref{fig:tightness_GO}. 
}
\end{remark}
 \begin{table}
  \centering
    \caption{
    \label{tab:estimator-cost}
    For the estimation of $\Vp[Y]$, we assume that $\Ep[Y]$ is unknown and that
    the bias-adjusted estimator is used. For the variance of $\Ep[e^{cY}]$, a
    first-order approximation is used (see \cite{wasserman2004}),
    assuming that $\Ep[Y]$ is small.}
    
    \begin{tabular}{|c||c|}
    \hline
      Quantity& Variance of estimator\\
      \hline 
      $E_P[Y]$& $\Vp[Y]/n$\\
      \hline
      $\Vp[Y]$& $2(\Vp[Y])^2/(n-1)$\\
      \hline
      $\Mp{c;Y}$& $c^2e^{2c\Ep[Y]}\var_P[Y]/n$\\
      \hline
    \end{tabular}
  \end{table}
The  attractive properties of the GO bounds demonstrated in Theorem~\ref{thm:GO-bounds} and 
Theorem~\ref{thm:tightness}, come at a potentially significant cost since they require the knowledge or calculation of the MGF $\Mp{c;\ti f}$ with respect to model $P$. 
If no simple formula for $\Mp{c;\ti f}$ is known,   this can be  a data-intensive operation---compare the estimator variance of the MGF with that of other QoIs in Table~\ref{tab:estimator-cost}. 
Controlling the variance of  an MGF estimator will require a large amount of data and/or the use of a
  multi-level Monte Carlo method, see also the discussion in
  Section~\ref{sec:intro} and 
  Section~\ref{sec:epistemic-UQ-conc}. 

In the next section we introduce a new class of inequalities that share the aforementioned  features of the GO divergence and \eqref{eq:intro:bias-bound-go:sec2}, 
but they can  bypass the estimation of an MGF by using the concept of concentration inequalities.

  \section{Concentration/Information Inequalities  for Model Bias}
  \label{sec:conc-ineq}
  To bypass the estimation or 
  computation of the MGF in \eqref{eq:intro:godiv:2} we will  use a QoI-dependent concentration bound for the MGF, \ie a function $\Phip{c}$
taking values in $(0, \infty]$ 
such that 
  \begin{align}
    \label{eq:conc:mgf-bound}
    \Mp{c;\ti f}\leq \Phip{c} 
  \end{align}
for all $c \in \mathbb{R}$. 
Since the moment generating function 
$\Mp{c;\ti f}$ can take the 
value $+\infty$ it is natural to 
allow the same for $\Phip{c}$. 

Bounds of the form 
\eqref{eq:conc:mgf-bound},
 for explicitly computable 
 functions $\Phip{c}$,
are called concentration inequalities  and we discuss  several such examples in Section~\ref{sec:unbounded-obs} and Section~\ref{sec:bounded-obs}, as well as in Section \ref{sec:scalability}. 
Although we use only the simplest concentration inequalities here, the results are indicative  to what can be accomplished using such 
  information on $f$ and $P$. In upcoming work, we will consider further applications for 
  stochastic  processes and interacting particle systems arising in Kinetic Monte Carlo and molecular dynamics models.
Concentration inequalities is an important  mathematical tool
since they allow, via a Chernov bound, to control  tail events, i.e. they provide explicit bounds on the probability that a
random variable deviates from typical behavior. More specifically, such methods  can address, among others, questions on rare events~\cite{dembo2010large}, model selection
methods~\cite{massart2007concentration}, statistical mechanics~\cite[Section
8.4]{ledoux2005concentration} and random matrices~\cite{tropp2015introduction}.  
Here we propose the use of concentration inequalities in tandem with the information inequalities \eqref{eq:intro:bias-bound-go:sec2}
for uncertainty quantification and especially for providing model bias guarantees.
In  Theorem~\ref{thm:generalized-go-bounds} we show how to construct 
new bounds for the model bias using a function  $\Phi$ satisfying (\ref{eq:conc:mgf-bound}).

\begin{theorem}
  \label{thm:generalized-go-bounds}
  Let $P$ be a probability measure and let $f$ be a QoI such that its MGF $\Mp{c;\ti  f}$ is finite in a neighborhood of the origin. 
%
 %
Let $\Phi:\mathbb{R}\to  (0,\infty]$  be a  function with $\Phip{0}=1$, $\Phi'(0)=0$ and such that 
  \begin{equation}
  \label{eq:conc:mgf-bound:thm}
  \Mp{c;\ti f}\leq \Phip{c}
  \end{equation}
 for all $c \in \mathbb{R}$.
%
  We define the set of admissible QoIs by
  \begin{align}
    \label{eq:Fp}
    \Fp=\{g:\Mp{c;\tilde g}\leq \Phip{c} 
    \}.
  \end{align}
Then,  $f \in \Fp$,  and for every $Q\in \efam =\{Q: R(Q\|P) \le
\eta^2\}$  we have
\begin{equation}
 \label{eq:U-bounds-bias}
 \begin{aligned}   
    -\Um \eta\Fp &\leq \Eq[g]-\Ep[g]  &\leq \Up \eta\Fp\, \quad  \mbox{for every $g\in\Fp$}\, ,
    \end{aligned}
\end{equation}
    where 
    \begin{equation}
    \Upm \eta\Fp:=\inf_{c>0}\left \{\frac{1}{c}\log\Phi(\pm c)+\frac{1}{c} \eta^2 \right\}. \label{eq:U-divpm}
  \end{equation}
\end{theorem}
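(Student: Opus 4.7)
The strategy is to apply the tight GO bound of Theorem~\ref{thm:GO-bounds} to an arbitrary $g\in\Fp$, and then to dominate the MGF appearing inside the infimum by the concentration function $\Phi$ and the KL divergence by $\eta^2$. Because both bounds are uniform in the dual variable $c$, they survive the infimum and collapse to $\Upm{\eta}{\Fp}$. That $f$ itself lies in $\Fp$ is immediate from the hypothesis \eqref{eq:conc:mgf-bound:thm}.

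\medskip

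\noindent\textbf{Step 1: Apply the GO bound to $g$.} For each $g\in\Fp$ the assumption $\Mp{c;\tilde g}\le \Phip{c}$ together with $\Phip{0}=1$ ensures that $\Mp{c;\tilde g}$ is finite in a neighborhood of $c=0$ (for negative $c$ this uses $\Phi(-c)<\infty$ near $0$, guaranteed by $\Phi(0)=1<\infty$ together with a standard monotonicity/finiteness argument, or one may restrict the discussion to the interval where $\Phi$ is finite). Hence Theorem~\ref{thm:GO-bounds} applies to $g$ and gives, for every $Q\in\efam$,
\begin{equation*}
-\go QP{-g}\;\le\;\Eq[g]-\Ep[g]\;\le\;\go QP{g}.
\end{equation*}

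\medskip

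\noindent\textbf{Step 2: Bound the upper GO divergence.} Using $\Mp{c;\tilde g}\le\Phip{c}$ and $\R QP\le\eta^2$,
\begin{equation*}
\go QP{g}=\inf_{c>0}\left\{\frac{1}{c}\log\Mp{c;\tilde g}+\frac{1}{c}\R QP\right\}\le\inf_{c>0}\left\{\frac{1}{c}\log\Phip{c}+\frac{1}{c}\eta^2\right\}=\Up{\eta}{\Fp}.
\end{equation*}

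\medskip

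\noindent\textbf{Step 3: Bound the lower GO divergence.} The only subtlety is sign-handling: the centered version of $-g$ equals $-\tilde g$, so
\begin{equation*}
\Mp{c;\widetilde{-g}}=\Ep\!\left[e^{-c\tilde g}\right]=\Mp{-c;\tilde g}\le\Phip{-c},
\end{equation*}
where the last inequality uses that \eqref{eq:conc:mgf-bound:thm} holds for \emph{all} real arguments. Therefore
\begin{equation*}
\go QP{-g}=\inf_{c>0}\left\{\frac{1}{c}\log\Mp{c;\widetilde{-g}}+\frac{1}{c}\R QP\right\}\le\inf_{c>0}\left\{\frac{1}{c}\log\Phip{-c}+\frac{1}{c}\eta^2\right\}=\Um{\eta}{\Fp},
\end{equation*}
which yields $-\Um{\eta}{\Fp}\le\Eq[g]-\Ep[g]$ and completes \eqref{eq:U-bounds-bias}.

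\medskip

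\noindent\textbf{Expected obstacle.} The argument is essentially a monotonicity-under-infimum computation, so there is no serious technical difficulty; the only point requiring care is the two-sided nature of the hypothesis on $\Phi$ and the identity $\widetilde{-g}=-\tilde g$, which is what allows the same $\Phi$ to control both tails and produces the asymmetric quantities $\Phip{\pm c}$ in $\Upm{\eta}{\Fp}$. The normalizations $\Phip{0}=1$ and $\Phi'(0)=0$ play no role in the inequality itself but ensure $\Phi$ is a \emph{consistent} bound for a centered MGF (matching value and derivative at the origin) and guarantee that $\log\Phip{\pm c}/c$ tends to $0$ as $c\downarrow 0$, so that $\Upm{\eta}{\Fp}$ is well defined and non-trivial.
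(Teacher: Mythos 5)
Your proof is correct and follows exactly the route the paper takes (the paper compresses it to one sentence: apply the GO bound of Theorem~\ref{thm:GO-bounds} to each $g\in\Fp$, then dominate $\Mp{c;\tilde g}$ by $\Phip{\pm c}$ and $\R QP$ by $\eta^2$ inside the infimum). Your Step 3, making explicit the identity $\widetilde{-g}=-\tilde g$ and hence $\Mp{c;\widetilde{-g}}=\Mp{-c;\tilde g}\le\Phip{-c}$, is the only point the paper leaves implicit, and you handle it correctly.
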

\begin{proof}
The proof follows immediately from \eqref{eq:intro:godiv:2} and \eqref{eq:intro:bias-bound-go:sec2}, combined with the concentration inequality \eqref{eq:conc:mgf-bound:thm} and the definition of the admissible QoIs, $\Fp$. We discuss specific examples of inequalities of the type  \eqref{eq:conc:mgf-bound:thm} and their corresponding admissible sets $\Fp$, in Section~\ref{sec:unbounded-obs} and Section~\ref{sec:bounded-obs} below.
\end{proof}



\begin{remark}[Admissible set of QoIs]{\rm
\label{rem:admiss}
We note that the function 
$\Phi$ depends both on the QoI 
$f$ and on $P$ through 
\eqref{eq:conc:mgf-bound:thm}
and therefore the  set of 
admissible functions $\Fp$ also 
depends on the QoI $f$ and on 
$P$. However, to keep notation simple, we suppress this dependence for both $\Phi$ and $\Fp$.
}
\end{remark}

\begin{remark}
[Computing $\Upm \eta\Fp$]{\rm   
Some concentration bounds  \eqref{eq:conc:mgf-bound:thm} such as the sub-Gaussian and  Hoeffding bounds discussed below provide explicit formulas for $\Upm \eta\Fp$, see for instance \eqref{eq:sub-G-Upm} and (\ref{eq:hoef-u}). However, in general---see the sharper Bennett bounds in~(\ref{eq:bennet-mgf-ineq} and
(\ref{eq:bennet-(a,b)})---we have an explicit formula for $\Phi$ but no explicit closed form solution of the optimization over $c$. 
The elementary one-dimensional optimization in \eqref{eq:U-divpm} can be carried out with standard solvers, \eg Newton's method.
}
\end{remark}

\sepeq{Divergence structure of $\Upm \eta\Fp $:}
The following properties of the bounds  $U_{\pm}$ in 
\eqref{eq:U-divpm}
are analogous to the properties of the GO divergence \eqref{eq:intro:godiv:2} outlined in Theorem~\ref{thm:GO-bounds}. One notable difference is that here the divergence structure defined by $\Upm \eta\Fp $ contains information about the entire family $\Fp$ in \eqref{eq:Fp}
and not just a single QoI $f$ as was the case in the GO divergence \eqref{eq:intro:godiv:2}.

\begin{theorem}\label{thm:div-U}
Under the assumptions of Theorem~\ref{thm:generalized-go-bounds} and, in addition, if 
\begin{equation}    \label{eq:tightness:U:condition:thm:div-U}
    \Phi(c)=M_{\bar P}(c; \tilde h)\, ,  
  \end{equation} 
 for  some probability $\bar P$ and QoI $h$ and all $c \in \mathbb{R}$ then $\Upm\eta\Fp$ satisfy: 
\begin{enumerate}
\item {\it Divergence Properties:} \label{it:u:non-neg} 
\begin{itemize}
\item[a.] $\Upm \eta\Fp \geq 0$,  and 
\item[b.] $\Upm \eta\Fp=0$ if and only if $\eta=0$ or $\Fp$ is trivial, i.e. consists only  of functions which are constant $P$-a.s. 
\end{itemize}
\item \label{it:u:linear}{\it Linearization:}  If $\Phi=\Phi(c)$ is twice differentiable in a neighborhood of $c=0$, then we have the asymptotics  
$\Upm \eta\Fp=\sqrt{2\Phi''(0)}\eta +O(\eta^2)$ and thus,
\begin{equation}
\label{eq:linearize-U}
 \begin{aligned}
    |\Eq[g]-&\Ep[g]| \leq \sqrt{2\Phi''(0)}\eta +O(\eta^2)\,   &\mbox{for all $g \in \Fp$} \quad \mbox{and  all $Q\in \efam$}\, .
 \end{aligned}
\end{equation}
\end{enumerate}
\end{theorem}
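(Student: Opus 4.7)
The plan is to exploit hypothesis \eqref{eq:tightness:U:condition:thm:div-U}, namely that $\Phi$ is itself the MGF of a centered random variable under an auxiliary measure $\bar P$, so that $\Upm{\eta}{\Fp}$ inherits the structure of a goal-oriented divergence: for any $\bar Q$ with $R(\bar Q\|\bar P) = \eta^2$, direct comparison of \eqref{eq:intro:godiv:2} and \eqref{eq:U-divpm} gives $\Upm{\eta}{\Fp} = \Xi(\bar Q\|\bar P; \pm h)$. Each part of the theorem will then follow by translating the corresponding property of $\Xi$ from Theorem \ref{thm:GO-bounds}, after identifying ``$\bar Q = \bar P$'' with ``$\eta = 0$'' and ``$h$ constant $\bar P$-a.s.'' with ``$\Fp$ trivial''.

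For non-negativity (1a), I would apply Jensen's inequality to $\Phi(\pm c) = \E_{\bar P}[e^{\pm c\tilde h}]$ together with $\E_{\bar P}[\tilde h] = 0$ to get $\Phi(\pm c) \geq 1$, hence $\log\Phi(\pm c) \geq 0$, making the infimum in \eqref{eq:U-divpm} non-negative. For the zero characterization (1b): in the ``if'' direction, when $\eta = 0$ I would Taylor-expand $\log\Phi$ near the origin using $\Phi(0)=1$ and $\Phi'(0)=0$, giving $\log\Phi(\pm c)/c = O(c)\to 0$ as $c\to 0^+$ and hence $\Upm{0}{\Fp}=0$; when $\Fp$ is trivial, Jensen plus strict convexity of MGFs identifies this case with $\Phi\equiv 1$, so $\Upm{\eta}{\Fp} = \inf_{c>0} \eta^2/c = 0$. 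In the ``only if'' direction, if $\Upm{\eta}{\Fp}=0$ with $\eta>0$, the infimum cannot be attained at any finite $c$ (since $\log\Phi\geq 0$ and $\eta^2/c>0$), so it must be approached as $c\to\infty$, forcing $\log\Phi(\pm c)/c\to 0$; the standard MGF asymptotic $\lim_{c\to\infty}\log M_{\bar P}(c;X)/c = \mathrm{ess\,sup}_{\bar P}(X)$, combined with $\E_{\bar P}[\tilde h]=0$, would then force $\tilde h=0$ $\bar P$-a.s., hence $\Phi\equiv 1$ and $\Fp$ trivial.

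For the linearization (2), Taylor expansion gives $\log\Phi(\pm c) = \tfrac{1}{2}c^2\Phi''(0)+O(c^3)$; substituting into \eqref{eq:U-divpm} yields $\tfrac{1}{c}\{\log\Phi(\pm c)+\eta^2\} = \tfrac{c}{2}\Phi''(0) + \tfrac{\eta^2}{c} + O(c^2)$, and minimizing the leading two terms over $c>0$ gives the optimizer $c^\ast_\eta = \eta\sqrt{2/\Phi''(0)}$ with minimum value $\eta\sqrt{2\Phi''(0)}$; the remainder contributes $O(\eta^2)$ at this optimizer. This produces the asymptotic $\Upm{\eta}{\Fp} = \sqrt{2\Phi''(0)}\,\eta + O(\eta^2)$, and the bias bound \eqref{eq:linearize-U} then follows by combining with Theorem \ref{thm:generalized-go-bounds}. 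The main obstacle is the ``only if'' direction in (1b): turning $\Upm{\eta}{\Fp}=0$ into triviality of $\Fp$ relies essentially on the representation \eqref{eq:tightness:U:condition:thm:div-U}, both to invoke the large-$c$ asymptotic for the log-MGF and to transfer the conclusion $\tilde h=0$ back to the structure of $\Fp$ via the equivalence $\Phi\equiv 1 \iff \Fp$ trivial (which itself combines Jensen's inequality with strict convexity of the MGF).
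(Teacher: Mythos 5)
Your proposal follows essentially the same route as the paper: the hypothesis $\Phi(c)=M_{\bar P}(c;\tilde h)$ identifies $\Upm\eta\Fp$ with the goal-oriented divergence $\go{Q}{\bar P}{\pm h}$ for any $Q$ with $R(Q\|\bar P)=\eta^2$, the divergence properties are transferred from Theorem~\ref{thm:GO-bounds} (your Jensen and $\mathrm{ess\,sup}$ arguments simply unpack what that theorem already asserts), the passage from ``$h$ constant $\bar P$-a.s.'' to ``$\Fp$ trivial'' is the identical Jensen argument, and the linearization is the same Taylor expansion the paper delegates to \cite{katsoulakisuq16}. The one soft spot you share with the paper is the converse implication ``$\Fp$ trivial $\Rightarrow \Upm\eta\Fp=0$'', which requires that triviality of $\Fp$ force $\Phi\equiv 1$ (not true for degenerate $P$); you assert this equivalence via ``Jensen plus strict convexity'' just as the paper leaves it implicit, so it is not a gap relative to the published argument.
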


\begin{proof} The proof follows from Theorem~\ref{thm:GO-bounds}.  Indeed since, by assumption, $\Phi(c)=M_{\bar P}(c; \tilde h)$ we have 
\begin{equation}
\Upm \eta\Fp = \go Q{\bar P}{\pm h} 
\end{equation} 
for any probability $Q$ such that $\R Q {\bar P}=\eta^2$.  Therefore, by Theorem~\ref{thm:GO-bounds}, $\Upm \eta\Fp \ge 0$ and  
$\Upm \eta\Fp = 0$ if and only if  $\eta =0$ or $h$ is constant ${\bar P}$ a.s.  But if  
$h$ is constant ${\bar P}$ a.s then $\Phi(c)=M_{\bar P}(c; \tilde h)=1$ for all $c$ and thus the set of admissible QoIs \eqref{eq:Fp} becomes: 
\begin{align}
    \label{eq:Fp:app}
    \Fp=\{g:\Mp{c;\tilde g}\leq \Phip{c}=1
    \}\, .
  \end{align}
However for any $g \in \Fp$, 
by Jensen's  inequality, $\Mp{c;\tilde g}\ge 1$ since $\mathbb{E}_{P}[\tilde g]=0$.  Therefore the admissible set $\Fp$ consists only  of constant functions thus $g$ is constant $P$-a.s. Finally, the linearization in Theorem~\ref{thm:div-U} is proved similarly to the linearization result 
of the GO divergence in Theorem~\ref{thm:GO-bounds},  (see the proof in Section 3 of \cite{katsoulakisuq16}).
\end{proof}

Theorem~\ref{thm:generalized-go-bounds} 
and Theorem~\ref{thm:div-U}
motivate the following definition, in analogy to the goal oriented (GO) divergence \eqref{eq:intro:godiv:2} defined for a single QoI $f$:

\begin{definition}[Concentration/Information Divergence]
Given the notation and assumptions of Theorem~\ref{thm:generalized-go-bounds} and Theorem~\ref{thm:div-U}, we define the concentration/information divergence between a baseline model $P$ and the family  of  models $\efam$, satisfying \eqref{eq:U-bounds-bias}
for all   QoIs  in $\Fp$:
\begin{equation}
    \Upm \eta\Fp:=\inf_{c>0}\left \{\frac{1}{c}\log\Phi(\pm c)+\frac{1}{c} \eta^2 \right\}\, , \label{eq:U-divpm:def}
  \end{equation}
where $\efam$  and $\Fp$, are defined in    \eqref{Qfamily}
and \eqref{eq:Fp} respectively.

\end{definition}

\begin{remark} [Features of Concentration/Information Inequalities]
{\rm 
While the GO divergence bounds \eqref{eq:intro:bias-bound-go:sec2} are defined for a specific QoI $f$,  key features  of the 
new bounds in Theorem~\ref{thm:generalized-go-bounds} include:  \begin{enumerate*} \item[(a)] allow to consider whole families of admissible QoIs
$\Fp$ defined in (\ref{eq:Fp}),
and \item[(b)] they bypass the costly MGF
calculations needed in the GO divergence  \eqref{eq:intro:godiv:2}. \end{enumerate*} 
 Finally,  we next show that the new bounds \eqref{eq:U-bounds-bias} still share the advantages of the  GO divergence bounds, namely: in    Section~\ref{sec:sharpness-and-robustness} we  prove that   \eqref{eq:U-bounds-bias} is, (c) tight in the family of models $\efam$, \eqref{Qfamily}, and the family of QoIs $\Fp$, \eqref{eq:Fp}.
 in Section~\ref{sec:scalability} we show that \eqref{eq:U-bounds-bias} is,  (d) scalable to   QoIs that depend on large numbers of data such as statistical estimators  and  to high dimensional probabilistic models.  
}
\end{remark}

We will next discuss   specific examples of the bound $\Phip{c}$ in the  concentration bounds  \eqref{eq:conc:mgf-bound:thm} and Theorem~\ref{thm:generalized-go-bounds}; furthermore, we also demonstrate  how we can select 
such  concentration bounds depending on the information we have regarding the distribution $P$. We divide our presentation into two  cases, namely bounded and unbounded QoIs $f$.

\subsection{Sub-Gaussian Bounds}
\label{sec:unbounded-obs}
For an unbounded QoI $f$ and a probability distribution $P$, we can characterize
the type of concentration by bounding either the tail probabilities $P(f(X)-\Ep[f]>a)$ for all $a$ or
$\Mp{c;\ti f}$ for all $c$ for which the MGF is finite. In this section, we discuss the (classical) sub-Gaussian 
concentration bounds which are characterized by Gaussian
decay of the tails.  
Sub-exponential bounds (see
Section~\ref{ex:expo-dist}) 
and sub-Poissonian bounds 
could also be useful in 
various situations but we will not discuss them further here (see e.g. \cite{ledoux2005concentration}).



\sepeq{Sub-Gaussian concentration bounds   ~\cite{concentration}
:} We say that $f=f(X)$ is a sub-Gaussian random variable if there
exists a $\sib>0$ such that
\begin{align}
  \label{eq:sub-Gaussianity}
  \Mp{c;\ti f}\leq \Phip{c}:=\exp(c^2\sib^2/2) \text{ for all } c\in\mathbb{R}.
\end{align}
Now given  a fixed $\sib$, we can consider the family of QoIs defined in \eqref{eq:Fp}, 
\begin{equation}
\label{eq:Fp:subG}
\Fp:=\{g:\Mp{c;\tilde g}\leq \Phip{c}=\exp(c^2\sib^2/2)\}\, ,
\end{equation}
i.e. we consider  all random variables with MGF bounded by the MGF of a normal random variable with
variance $\sigma_B^2$. Furthermore, using \eqref{eq:U-divpm} we can write an explicit formula for
$\Upm{\eta}\Fp=\inf_{c>0}\{\frac{c \sigma_B}{2}+ \frac{\eta^2}{c}\}$ as 
\begin{align}
  \label{eq:sub-G-Upm}
  \Upm \eta\Fp=\sib \sqrt{2} \eta. 
\end{align}
By expanding
$\Mp{c;\ti f}$ around $c=0$, we can readily show that $\sib^2$ is an upper bound
of $\Vp[f(X)]$. 
Relation~(\ref{eq:sub-G-Upm})  also implies that there is no
$\eta$-admissible model $Q\in \mathcal{Q}_\eta$ for which the QoIs under
consideration lie beyond the uncertainty  region  given by
Theorem~\ref{thm:generalized-go-bounds}:
\begin{equation}
  -\sib \sqrt{2}\eta\leq \E_Q[g]-\E_P[g] \leq
  \sib \sqrt{2}\eta
\end{equation}
for all models $Q\in \mathcal{Q}_\eta$ and QoIs $g \in
\Fp$. In Corollary~\ref{cor:coeff-of-var}, we consider the special case where $P$
is a normal distribution which is compared against any models $Q$--possibly not normal--from $\efam$. 

\begin{corollary}
  \label{cor:coeff-of-var}
  Consider the QoI $f(x)=x$ where $P=N(\mu,\sigma^2)$. Also, let $Q$ be any distribution such that $\R QP\le \eta^2$. Then, if the coefficient of variation (also known as relative standard deviation) is $c_v:=\sigma/|\mu|$, the relative model bias satisfies: 
  \begin{equation*}
    -c_v \sqrt{2}\eta \le \frac{\Eq[f]- \Ep[f]}{|\Ep[f]|} \leq c_v \sqrt{2}\eta \, . 
  \end{equation*}
\end{corollary}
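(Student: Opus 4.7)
The plan is a direct specialization of Theorem~\ref{thm:generalized-go-bounds} to the Gaussian baseline, using the sub-Gaussian concentration formula \eqref{eq:sub-G-Upm}. First I would compute, via the standard Gaussian MGF identity, that the centered QoI $\ti f(x) = x-\mu$ satisfies
\[
\Mp{c;\ti f} \;=\; \exp(c^{2}\sigma^{2}/2) \qquad \text{for all } c\in\mathbb{R}\, .
\]
Hence the sub-Gaussian bound \eqref{eq:sub-Gaussianity} holds with equality for the choice $\sib=\sigma$. In particular, the hypotheses of Theorem~\ref{thm:generalized-go-bounds} are satisfied with $\Phip{c}=\exp(c^{2}\sigma^{2}/2)$ (so that $\Phip{0}=1$ and $\Phi'(0)=0$), and $f\in\Fp$ for the sub-Gaussian admissible family defined in \eqref{eq:Fp:subG}.

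Next, I would invoke the model bias bound \eqref{eq:U-bounds-bias} with $g=f$, using the explicit value $\Upm{\eta}{\Fp}=\sigma\sqrt{2}\,\eta$ furnished by \eqref{eq:sub-G-Upm}, to obtain
\[
-\sigma\sqrt{2}\,\eta \;\le\; \Eq[f]-\Ep[f] \;\le\; \sigma\sqrt{2}\,\eta
\]
for every $Q\in\efam$. Finally, since $\Ep[f]=\mu$, so $|\Ep[f]|=|\mu|$, dividing through by $|\mu|$ and recognizing $\sigma/|\mu|=c_{v}$ yields exactly the claimed relative model bias bound.

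There is essentially no obstacle in this argument: the Gaussian is the canonical sharp sub-Gaussian example, so the concentration inequality \eqref{eq:sub-Gaussianity} is attained with equality and no slack is introduced anywhere in the chain. The content of the corollary is therefore mostly interpretational: it exhibits the non-parametric fact that, across the entire KL-ball $\efam$ around a Gaussian baseline, the relative bias of the mean is uniformly controlled by the product of the coefficient of variation $c_{v}$ and $\sqrt{2}\,\eta$, regardless of whether the competing model $Q$ is itself Gaussian.
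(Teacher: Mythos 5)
Your proof is correct and follows exactly the route the paper intends: the Gaussian MGF gives equality in the sub-Gaussian bound \eqref{eq:sub-Gaussianity} with $\sib=\sigma$, so \eqref{eq:sub-G-Upm} and Theorem~\ref{thm:generalized-go-bounds} yield $|\Eq[f]-\Ep[f]|\le\sigma\sqrt{2}\eta$, and dividing by $|\Ep[f]|=|\mu|$ gives the stated bound. The paper leaves this specialization implicit, but your argument is the same one.
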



In general, sub-Gaussianity is a strong assumption for an unbounded random
variable. For example, if $P=\mathrm{Laplace}(1)$, \ie a two-sided exponential
distribution centered at zero, then $\Mp{c;X}=1/(1-c^2)$, $|c|<1$, which cannot
be bounded by any $\exp(c^2 \sib^2/2)$ for all $c$. 
Finally, we note that results like the McDiarmid's inequality, see
Section~\ref{sec:scalability} below,  or  the logarithmic Sobolev inequalities~\cite{raginsky2013concentration,bobkov1999exponential},  can provide values for the constant 
$\sib^2$ for  QoIs that satisfy specific properties, \eg \eqref{eq:mcdiarmid-cond}.


\subsection{Bennett and Hoeffding Bounds}
\label{sec:bounded-obs}
Many quantities of interest are bounded such as failure  probabilities or   functions of random variables with bounded support.
Bounded random variables are necessarily sub-Gaussian~\cite{concentration},
but much sharper bounds for their MGFs, \eqref{eq:conc:mgf-bound:thm}, can be derived and used to bound the
worst-case bias through Theorem~\ref{thm:generalized-go-bounds}. In this direction, we next
discuss some additional  concentration bounds for bounded QoIs, that we will also showcase  in examples in this work. This list is not
complete by any means and other concentration inequalities can be used here, 
see for instance~\cite{raginsky2013concentration} for other bounds. For each case below, the family of
QoIs $\Fp$ is defined in terms of the concentration bound on the MGF, \eqref{eq:Fp}, as in Theorem~\ref{thm:generalized-go-bounds}.

\sepeq{Bennett concentration bound}~\cite[Lemma 2.4.1]{dembo2010large}: Consider the   random variable $X$ where $X\sim P$  and the  QoI $f=f(X)$ such that $f(X)\leq b$,  for some $0 \le b < \infty$. Setting
$\mu:=\Ep[f(X)]$, $\tilde{b}:=b-\mu$, we have
\begin{equation}
\label{eq:bennet-mgf-ineq}
\begin{aligned} 
  \Mp{c;\ti f}\leq \Phip{c}:=&\frac{\tilde{b}^2}{\tilde{b}^2+\sib^2}\exp(-c\sib^2/\tilde{b})
  +& \frac{\sib^2}{\tilde{b}^2+\sib^2}\exp(c\tilde{b}), 
\end{aligned}
\end{equation}
for all $c\geq 0$ and where $\sib^2$ is any upper bound of $\Vp[f]$. Therefore, keeping in mind Remark~\ref{rem:admiss}, we define
\begin{equation}
\label{eq:Fp:Bennett}
\begin{aligned}
\Fp=&\{g:\Mp{c;\tilde g}\leq \Phip{c}\, 
\}\, ,   &\mbox{where $\Phi$ is defined in \eqref{eq:bennet-mgf-ineq}.}
\end{aligned}
\end{equation}

\sepeq{Bennett-$(a,b)$ concentration bound}~\cite[Corollary 2.4.5]{dembo2010large}: If the QoI $f$ is
such that $a\leq f(X)\leq b$, $X\sim P$, then we can set $\sib^2=(\mu-a)(b-\mu)$
in the Bennett bound to obtain
\begin{equation}
\label{eq:bennet-(a,b)}
\begin{aligned}
  \Mp{c;\ti f}\leq \Phip{c}:=\frac{\tilde{b}}{b-a}\exp(c\tilde{a})-\frac{\ti {a}}{b-a}\exp(c\tilde{b})  \text{ for all }  c\in\mathbb{R}.
\end{aligned}
\end{equation}
The right-hand side of~\eqref{eq:bennet-(a,b)} is the MGF of a
Bernoulli-distributed random variable with values $\{a,b\}$. Note that the
Bernoulli is the distribution with the most ``spread'' around the mean value
between all bounded random variables in $[a,b]$. Similarly to \eqref{eq:Fp:Bennett} we have,
\begin{equation}
\label{eq:Fp:Bennettab}
\begin{aligned}
\Fp=&\{g:\Mp{c;\tilde g}\leq \Phip{c}\, 
\}\, , &\mbox{where $\Phi$ is defined in \eqref{eq:bennet-(a,b)}.}
\end{aligned}
\end{equation}

\sepeq{Hoeffding concentration bound}\cite{hoeffding1963probability,dembo2010large}: When the QoI $f$ is bounded as
in the Bennett-$(a,b)$ case, we can further bound the Bennett-$(a,b)$ bound  by a Gaussian MGF,
giving rise to  the (less tight)  Hoeffding MGF bound,
\begin{align}
  \label{eq:Hoeffding}
  \Mp{c;\ti f}\leq \Phip{c}:=\exp(c^2(b-a)^2/8) \text{ for all } c\in\mathbb{R}.
\end{align}
As in
the sub-Gaussian case of Section~\ref{sec:unbounded-obs}, we can calculate $\Upm
\eta\Fp$ explicitly:
\begin{align}
  \label{eq:hoef-u}
  \Upm \eta\Fp =  (b-a)\sqrt{2}\eta. 
\end{align}
Finally the set of QoIs is 
\begin{equation}
\label{eq:Fp:Hoeffding}
\begin{aligned}
\Fp=&\{g:\Mp{c;\tilde g}\leq \Phip{c}\, 
    \},  &\mbox{where $\Phi$ is defined in \eqref{eq:Hoeffding}.}
\end{aligned}
\end{equation}
\begin{table}
  \centering
  \caption{\label{tab:MGF-bounds} The different MGF bounds along with the
    conditions they impose on $P$ and $f$ and the quantities they depend on for
    their implementation if we are interested in quantifying the worst-case
    bias. However, bounding the worst-case $\Eq[f]$ \textit{does not} require
    $\Ep[f]$. Gaussian decay of the tails of the distribution of $f(X)$ implies
    the sub-Gaussian MGF bound (similar assumptions about the tails exist for
    the rest of the bounds). In terms of information requirements, the Hoeffding
    bound requires the least amount, but it is also the least tight. As
    available information/data  for the bounds grow, the bounds get tighter.}
  \begin{tabular}{|l|c|c|c|r|}
    \hline
    Name & Conditions on $f,P$ & $\Phi=\Phi(c)$ input \\
    \hline
    Hoeffding~(\ref{eq:Hoeffding}) & $a\leq f(X)\leq b$& $a,b$  \\
    Bennett-$(a,b)$~(\ref{eq:bennet-(a,b)})& $a\leq f(X) \leq b$& $\Ep[f],a,b$   \\
    Bennett~(\ref{eq:bennet-mgf-ineq})& $f(X)\leq b$, $\Vp[f]\leq \sib^2$& $\Ep[f],b,\sib$ \\
    \hline \hline
    sub-Gaussian~(\ref{eq:sub-Gaussianity})& $\Mp{c;\ti{f}}\leq \exp( \sib^2c^2/2)$ & $\sib$\\
    \hline \hline 
    GO bound~(\ref{eq:intro:godiv:2}) & $\Mp{c;\ti f} < \infty $ &  $\Ep[(f)^k]$ for all $k$\\
    \hline
  \end{tabular}
\end{table}

\begin{remark}[Hierarchy of bounds]{\rm
  \label{rem:order}
 It is straightforward to demonstrate that we can order
  the bounds in terms of accuracy, noting that  if the QoI $f$ is bounded in $[a,b]$,  then we always have the bound
  $\sib^2\leq (\Ep[f]-a)(b-\Ep[f])$ in the Bennett bound \eqref{eq:bennet-mgf-ineq}. Therefore, we have the hierarchy of concentration bounds:
  \begin{align}
    \label{eq:ordering}
    \Mp{c;\ti f} 
    \leq \text{Bennett} \leq \text{Bennett-(a,b)}\leq \text{Hoeffding}. 
  \end{align}
  Unlike the two Bennett bounds,  the Hoeffding bound is independent of the location of the mean $\mu$ within the
interval $[a,b]$ and only depends on the length of the interval $b-a$.  As such, it
requires the least amount of information about $f$ and $P$ and is the least
sharp of the bounds, see Table~\ref{tab:MGF-bounds}
and the requirements for the QoI families $\Fp$, \eqref{eq:Fp:Bennett}, \eqref{eq:Fp:Bennettab} and \eqref{eq:Fp:Hoeffding}.  On the other end, the GO divergence bound---involving $\Mp{c;\ti f}$---is the  tightest, as we see in \eqref{eq:ordering}, but also the most expensive to implement, see 
Table~\ref{tab:estimator-cost}. We  also refer to a demonstration of this hierarchy in the example in Section~\ref{ex:truncated-normal}. Overall, as
    available information/data  on the QoI $f$ and and the baseline model $P$ grows, concentration  bounds and therefore model bias bounds become tighter.
    Finally, we  refer to Figure~\ref{fig:contour-bennet}, where we demonstrate the tightness of the model bias bounds \eqref{eq:U-bounds-bias},
    \eqref{eq:U-divpm}, 
    in terms of both $\eta^2=\R QP$ and $\sib$, for the  
    Bennett bounds \eqref{eq:bennet-mgf-ineq}.
    }
\end{remark}

\begin{figure}
  \centering \includegraphics[width=0.499\textwidth]{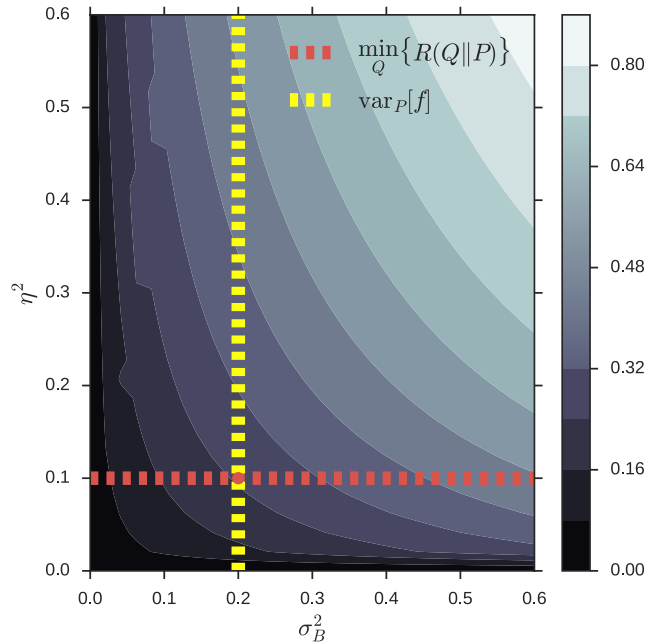}
  \caption{\label{fig:contour-bennet} Level curves of the upper  model bias bound \eqref{eq:U-bounds-bias} with the
    Bennett bound \eqref{eq:bennet-mgf-ineq} and assuming $b=1$, $\Vp[f]\leq \sib^2$. Knowing $\eta^2=\R QP$ for
    some model $Q$ and an upper bound on the variance provides model-bias
    guarantees (through Theorem~\ref{thm:generalized-go-bounds}). Further
    reduction of the model bias bound requires a corresponding---and potentially
    expensive---decrease in KL and/or a tighter upper-bound for $\Vp[f]$, for
    example, by incorporating additional data. The tightest possible guarantee
    afforded by the Bennett bound is gained when $\sib^2=\Vp[f]$ and
    $\eta^2=\min_Q{\R QP}$.}
\end{figure}

\begin{remark}{\rm 
\label{rem:size:Fp}[How large is the class 
$\Fp$?]
 A plausible question is how rich is the set of admissible QoIs, $\Fp$ derived by the various  concentration bounds on the MGF $\Mp{c;\ti g}$ in
\eqref{eq:Fp:subG}, \eqref{eq:Fp:Bennett}, \eqref{eq:Fp:Bennettab} and   \eqref{eq:Fp:Hoeffding}. Here we address this question in the context of the Bennett bound, however the same argument also applies to the Bennett-$(a,b)$ and Hoeffding bounds,
as well as to the sub-Gaussian case in Section~\ref{sec:unbounded-obs}.
We can get a simple first  insight in this direction based on \eqref{eq:bennet-mgf-ineq}. Indeed, based on the conditions for this inequality to hold, we readily have that 
$$
\Fp \supset \{g: \; g(X)\leq b\, , \;  \Vp[g] \le \sib^2\, , \; \Ep g=\mu\, 
\}\, .
$$
We also note that enforcing the condition on the mean, $\Ep g=\mu$, is trivial and involves only a translation of the QoI $g$. 
}
\end{remark}

\section{Tightness of the Concentration/Information Inequalities}
\label{sec:sharpness-and-robustness}
In this section we show that, under suitable assumptions, for the concentration/information bounds derived in Section~\ref{sec:conc-ineq} the divergence  $\Upm \eta \Fp$ retains some of the tightness properties of the  GO divergence  $\go QPf$  established in Section~\ref{sec:outline}.

\begin{theorem}
  \label{thm:sharpness-of-U}
Let $P$ be a probability  and  $ \mathcal{Q}_\eta=\{Q: R(Q\|P) \le \eta^2\}$. 
Assume $\Phi(c)=M_{P}(c; \tilde f)$ is   a
MGF for some QoI $f$ with respect to $P$ and let 
\begin{equation}
    \Fp=\{g:\Mp{c;\tilde g}\leq \Phip{c}\, \textrm{ for all } c  \in \mathbb{R}\}\,.
  \end{equation} 
Then,  there exist 
probabilities $\Ptil{\pm} \in \mathcal{Q}_\eta$ (see  \eqref{eq:parametric}) that satisfy $\R {\Ptil{\pm}}P=\eta^2$ 
and 
\begin{equation}
 \label{eq:U-sharp-plus}
\begin{aligned}
 \Up \eta\Fp &= \E_{\Ptil{+}}[f]-\E_{P}[f]
 &=\max_{Q\in \mathcal{Q}_\eta \,, g \in \Fp}\E_{Q}[g]-\E_{P}[g],
\end{aligned}
\end{equation}
\begin{equation}
 \label{eq:U-sharp-minus}
 \begin{aligned}
    -\Um \eta\Fp & = \E_{\Ptil{-}}[f]-\E_{P}[f]  &=
                   \min_{Q\in \mathcal{Q}_\eta\,, g \in \Fp}\E_{Q}[g]-\E_{P}[g]\, ,                  
 \end{aligned}  
\end{equation}
\ie the maximum and minimum for 
model bias is attained within the family of models $\mathcal{Q}_\eta$ and the family of QoIs $\Fp$, see the schematic in Figure \ref{fig:tightness}. 
We also have the  ``confidence band" around the baseline model $P$,

  \begin{equation}
    \label{eq:thm:tightness:main}
    \begin{aligned}
    -\Um \eta\Fp +\E_{P}[g]
    \le
    &\E_{Q}[g] \le \E_{P}[g]+
    \Up \eta\Fp
    \, &\mbox{for all $Q \in \mathcal{Q}_\eta$,  $g \in \Fp$\, ,}   
    \end{aligned}
  \end{equation}
  with the two equalities holding if $Q=P^{c_\mp}$ respectively and for $g=f \in  \Fp$.
\end{theorem}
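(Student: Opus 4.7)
The plan is to reduce the statement directly to the GO-divergence tightness result of Theorem~\ref{thm:tightness} by exploiting the hypothesis $\Phi(c)=\Mp{c;\ti f}$. First I would observe that this hypothesis forces the variational problem defining the concentration/information divergence to coincide with the one defining the GO divergence of $f$ on the KL sphere $\R QP=\eta^2$:
\begin{equation*}
\Upm \eta\Fp \;=\; \inf_{c>0}\Bigl\{\tfrac{1}{c}\log\Phi(\pm c)+\tfrac{\eta^2}{c}\Bigr\} \;=\; \inf_{c>0}\Bigl\{\tfrac{1}{c}\log\Mp{\pm c;\ti f}+\tfrac{1}{c}\R QP\Bigr\} \;=\; \go QP{\pm f}.
\end{equation*}
Second, since $\Mp{c;\ti f}=\Phi(c)\leq \Phi(c)$, the definition \eqref{eq:Fp} immediately gives $f\in\Fp$, so $f$ is itself an admissible QoI.

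Next I would invoke Theorem~\ref{thm:tightness} applied to this specific QoI $f$. Assuming $\eta\le\eta_\pm$, the theorem produces tilted measures $\Ptil{+},\Ptil{-}$ in the exponential family \eqref{eq:parametric}, both on the sphere $\R{\Ptil{\pm}}{P}=\eta^2$, and achieving the GO bounds:
\begin{equation*}
\E_{\Ptil{+}}[f]-\Ep[f] \;=\; \go{\Ptil{+}}{P}{f} \;=\; \Up\eta\Fp, \qquad \E_{\Ptil{-}}[f]-\Ep[f] \;=\; -\go{\Ptil{-}}{P}{-f} \;=\; -\Um\eta\Fp,
\end{equation*}
where the last equality in each chain uses the coincidence from the previous step.

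With these two ingredients in hand, the joint optimization over $(Q,g)\in\efam\times\Fp$ splits cleanly. For any such pair, Theorem~\ref{thm:generalized-go-bounds} supplies the envelopes
\begin{equation*}
-\Um\eta\Fp \;\le\; \Eq[g]-\Ep[g] \;\le\; \Up\eta\Fp,
\end{equation*}
which is precisely the confidence band \eqref{eq:thm:tightness:main}. The reverse inequalities then follow by testing the specific admissible pair $(Q,g)=(\Ptil{\pm},f)$, which by the previous step saturates the bound. Hence the suprema and infima in \eqref{eq:U-sharp-plus} and \eqref{eq:U-sharp-minus} are attained, with the extremal pair being $(\Ptil{\pm},f)$.

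The main obstacle is really just careful bookkeeping rather than any genuine analytical difficulty: one must verify that the maximizing tilted measures $\Ptil{\pm}$ produced by Theorem~\ref{thm:tightness} simultaneously lie in $\efam$ \emph{and} pair with an admissible QoI from $\Fp$ (namely $f$ itself), and one should remark on the boundary regime $\eta>\eta_\pm$, in which $f$ is essentially bounded above/below $P$-a.s. and part~2 of Theorem~\ref{thm:tightness} is invoked in place of part~1, with $f_\pm$ taking the role of the extremal expectation.
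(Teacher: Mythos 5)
Your proposal is correct and follows essentially the same route as the paper: identify $\Upm\eta\Fp$ with $\go{Q}{P}{\pm f}$ on the KL sphere via the hypothesis $\Phi(c)=\Mp{c;\ti f}$, note $f\in\Fp$, apply Theorem~\ref{thm:tightness} to produce the extremal tilted measures $\Ptil{\pm}$, and combine with the envelope from Theorem~\ref{thm:generalized-go-bounds} to get both the confidence band and its saturation at the pair $(\Ptil{\pm},f)$. Your closing remark on the boundary regime $\eta>\eta_\pm$ is a small extra care the paper's proof leaves implicit.
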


  

\begin{proof} Since $f\in \Fp$, Theorem \ref{thm:tightness} implies that the probabilities 
 $\Ptil{\pm}$ in \eqref{eq:parametric}, with 
$c_\pm$ chosen such that  
$\R
  {\Ptil{\pm}}P=\eta^2$ satisfy
  \begin{align}
    \label{eq:U-equals-GO}
    \go {P^{c_\pm}} P {\pm f} = \Upm \eta\Fp \,.
  \end{align}
 Therefore, by 
 Theorem~\ref{thm:generalized-go-bounds}, \eqref{eq:U-bounds-bias}, for all $Q \in \mathcal{Q}_\eta, g \in \Fp$
  \begin{equation}
    \label{eq:U-bounds-bias-tight}
    -\Xi(P^{c_-}||P; -f) \leq \Eq[g]-\Ep[g] \leq \Xi(P^{c_+}||P; f)\, .  
  \end{equation}
  Finally, we apply \eqref{eq:max} and \eqref{eq:min} of
  Theorem~\ref{thm:tightness} and 
  use~\eqref{eq:U-equals-GO} to
  conclude the proof.
\end{proof}

 \begin{figure}[!t]
    \centering \includegraphics[width=0.5\textwidth]{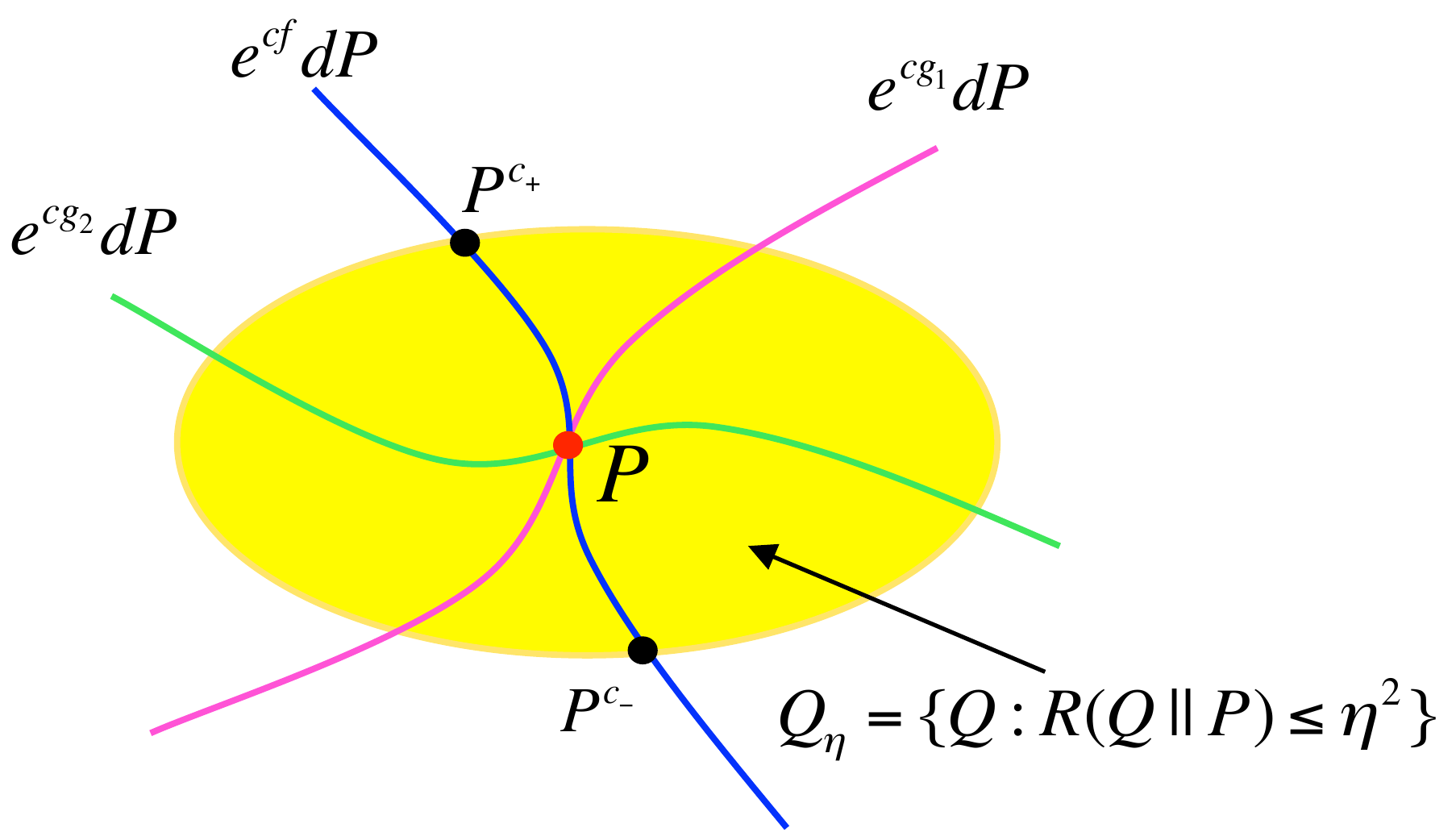}
    \caption{ The schematic depiction of Theorem~\ref{thm:sharpness-of-U} for a family of
     QoIs  $\Fp$ and tolerance $\eta^2$. The solid lines depict the one-parameter
      tilted probability distributions $P^c$ corresponding to the QoI $g_1, g_2, f
      \in \mathcal{F}_P$. The theorem implies that the upper and lower bounds in
      the family $\mathcal{Q}_\eta=\{Q: R(Q\|P) \le \eta^2\}$ are attained at
      the probability measures $Q^{\pm}=P^{c_{\pm}}$. }
    \label{fig:tightness}
  \end{figure}

\begin{remark}[Connections to Mass Transport]
{\rm
\label{rem:mass:transport}
The proof of Theorem \ref{thm:sharpness-of-U} is quite straightforward and we discuss here one approach to verify the crucial assumption of the Theorem, namely that 
\begin{equation}
 \label{eq:tightness:real:assumption}
 \Phi(c)=M_{P}(c; \tilde f)\, \quad \mbox{for some QoI $f$\, .}
 \end{equation}
One natural way to ensure this is intimately 
related to mass transport methods, \cite{villani2008optimal}. Instead of 
\eqref{eq:tightness:real:assumption} we may assume the more easily checkable hypothesis
  that $\Phi(c)=M_{\bar P}(c; \tilde h)$ 
for some $h$ and some model $\bar P$;  e.g. $h(x)=x$ and $\bar{P}$ a 
Gaussian distribution for the Hoeffding's bound, see also  Example~\ref{ex:sub-g-tight}
 below. 
To prove \eqref{eq:tightness:real:assumption} one  shows then that there exists  a {\it  transport map} between $P$ and $\bar P$, namely a map $T$ such that $\bar P(A)= P(T^{-1}(A))$ for any measurable set $A$ \cite{villani2008optimal}. If a transport map exists we have 
 $$
  \mathbb{E}_P[h \circ T]=\int h(Tx)P(dx)=\int h(y)\bar P(dy)=\mathbb{E}_{\bar
    P}[h]\,.
  $$
and hence with $f=h\circ T$ 
 $$
  \tilde f=f-\mathbb{E}_P[f]=h\circ T-\mathbb{E}_{\bar P}[h]=\tilde
  h \circ T\, .
  $$ 
This implies that 
\begin{align*}
\Phi(c)= M_{\bar P}(c; \tilde h)=\int e^{c{\tilde h}(y)}\bar P(dy) = \int
  e^{c{\tilde h} (Tx)} P(dx)= M_{P}(c; \tilde{f})\, ,
\end{align*}
 and thus the assumption \eqref{eq:tightness:real:assumption} holds.

  
To ensure the existence of such a transport map $T$ 
one needs some assumptions on $P$ (and $\bar P$).  
For example, if $P$ and $\bar P$ are non-atomic 
measures then a transport map always exists. 
If the measure $P$ has a density then $T$ can be constructed using the Knothe-Rosenblatt rearrangement or Brenier's $L_2$ optimal transport  map; we refer to  Chapter 1~\cite{villani2008optimal,Carlier2010}  for more details on these maps, and several  other such transport maps and relevant conditions for their existence.
}
\end{remark}

Next we demonstrate how to use Theorem~\ref{thm:sharpness-of-U} by interpreting
$\Phip{c}$ as the MGF of a suitable QoI $f$ with respect to the distribution $P$. 
In Example~\ref{ex:bennet-bernoulli}, we illustrate the tightness of the
concentration bounds for the case of bounded random variables supported in $[-1,1]$, while   
the arguments can be trivially generalized to any other
bounded interval.

\begin{example}[Bennett-(a,b) QoIs]
{\rm 
  \label{ex:bennet-bernoulli}
  Consider a distribution $P$ such that there is  an event $A\subset \mathbb{R}$  such that  $P(A)=1/2$; we consider 
the  family of QoIs, $\Fp$, for which~(\ref{eq:bennet-(a,b)}) is true with $a=-1$, $b=1$, $\Ep[g]=0$ for
  all $g\in \Fp$. The corresponding Bennett-(a,b) bound is
  \begin{align*}
    \Phip{c}=\frac{1}{2}e^c+\frac{1}{2}e^{-c} \, . 
  \end{align*}
Then, if 
we choose 
   $ f(x):=2\cdot 1_{A}(x)-1$, 
  where $1_A$ is the characteristic function of the set $A$, we have $\Phi(c) =
  M_{P}(c; \tilde f)$.
  Therefore 
Theorem~\ref{thm:sharpness-of-U} is immediately applicable. 
}
 \end{example}
The next example  
covers the case of sub-Gaussian QoIs which contains both bounded and
unbounded random variables.
\begin{example}[sub-Gaussian QoIs]{\rm 
  \label{ex:sub-g-tight}
Consider the probability measure  $P$ on $\mathbb{R}$ which has a density. 
For sub-gaussian QoIs \eqref{eq:sub-Gaussianity}
we have the bound $\Phip{c}=\exp(c^2\sib^2/2)$, however, we  can rewrite the bound as 
$$
\Phip{c}=M_{\bar P}(c; \tilde h)
$$
where  $h(x)=x$ and $\bar P=N(0,\sib^2)$ is a normal 
distribution.  Since $P$ has a density, we can use the measurable isomorphism, or any other applicable map discussed in  Remark~\ref{rem:mass:transport}, to construct a transport map $T$ between $P$ and $\bar P$. 
Thus, we can  show the existence of a a QoI $f$ that satisfies the condition \eqref{eq:tightness:real:assumption}
and we can readily
apply 
Theorem~\ref{thm:sharpness-of-U} to show the tightness of the  bounds given by \eqref{eq:sub-G-Upm}. 
 }
\end{example}

\section{Model Bias for Statistical Estimators}
\label{sec:scalability}

As discussed in Section \ref{sec:outline} a key challenge is 
to  control the risk involved in evaluating statistical estimator
using the baseline model $P$ rather than the true model $Q$. 
In addition it is important to control the bias of QoI
which are not necessarily  expected values, for example the 
bias in the variance, i.e, $\Vp{X}-\Vq{X}$, or other statistics such 
as correlation of skewness, or mean and quantiles, see~\cite{wasserman2004}. 
Generally, given data $\nn X$, we aim to control the bias of statistical 
estimator $\psi=\psi(\nn X)$, for example the sample 
variance~\eqref{eq:sample-var}. 


To obtain useful bounds on the bias of  statistical estimators 
$\psi$, we need to exhibit and control 
the dependence of the inequalities
in Sections~\ref{sec:bounded-obs} and~\ref{sec:unbounded-obs} on the amount of data available, i.e. the dependence on $n$. We 
will exhibit  a large and natural class of statistical estimators 
for which inequalities asymptotically independent on $n$. As 
demonstrated in \cite{ katsoulakisjcp2017} the Concentration/Information inequalities  of Sections \ref{sec:outline} and 
\ref{sec:conc-ineq} are the only known information equalities which 
scale properly with $n$.

The main tool we shall use is the key result used in the  proof 
of the McDiarmid's inequality, see also the Hoeffding-Azuma bound, \cite{dembo2010large}. We refer to Chapter 2 of  \cite{raginsky2013concentration} or ~\cite{mcdiarmid1989method} for the proof.

\begin{proposition} \label{thm:McDiarmid}
Let $\nn X$ be independent random variables with joint distribution $P^n=P_1\times\cdots\times P_n$. 
Let $\psi(\nn x)$ satisfy the Lipschitz condition
%
%
\begin{equation}
\label{eq:mcdiarmid-cond}
\begin{aligned}
    &\sup_{\nn x,x_k'}|\psi(x_1,\ldots,x_k,\ldots, x_n)-\psi(x_1,\ldots,x_k',\ldots, x_n)| 
    &\leq d_k \, .
  \end{aligned}
\end{equation}
for some constants $d_k$, $k=1,\ldots,n$. Then $\psi(\nn X)$ is 
a sub-Gaussian random variable and for all  $c \in \mathbb{R}$ 
we have
\begin{equation}
 \label{eq:mcdiarmid-res}
 \begin{aligned} 
    M_{P^n}(c; \tilde \psi)&=\E_{P^n}\left [\exp\left(c(\psi-\E_{P^n}[\psi])\right)\right]  &\leq \exp \left( \frac{c^2}{8}\sum_{k=1}^{n}d_k^2\right) \, . 
  \end{aligned} 
\end{equation}
\end{proposition}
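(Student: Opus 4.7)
The plan is to prove this classical bounded differences MGF bound via the Doob martingale decomposition combined with Hoeffding's lemma. First I would define, for $k=0,1,\ldots,n$, the conditional expectations $Z_k := \E_{P^n}[\psi(X_1,\ldots,X_n)\mid X_1,\ldots,X_k]$, so that $Z_0=\E_{P^n}[\psi]$ and $Z_n=\psi$, and introduce the martingale differences $D_k := Z_k - Z_{k-1}$. Telescoping gives $\psi - \E_{P^n}[\psi] = \sum_{k=1}^n D_k$, turning the MGF into a product via the tower property.

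The key step is to show that, conditionally on $X_1,\ldots,X_{k-1}$, the random variable $D_k$ is mean zero and takes values in some interval of length at most $d_k$. Using independence of the $X_j$, I would write $Z_k = h_k(X_1,\ldots,X_k)$ where $h_k(x_1,\ldots,x_k) := \E_{P^n}[\psi(x_1,\ldots,x_k,X_{k+1},\ldots,X_n)]$, and similarly $Z_{k-1}$ is the further average of $h_k$ over $X_k$. The bounded difference hypothesis \eqref{eq:mcdiarmid-cond} is preserved under integration over $X_{k+1},\ldots,X_n$, which yields
\[
\sup_{x_k, x_k'}\bigl|h_k(x_1,\ldots,x_{k-1},x_k)-h_k(x_1,\ldots,x_{k-1},x_k')\bigr|\le d_k,
\]
so $D_k$ lies, conditionally, in an interval of width $\le d_k$.

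Given this conditional boundedness, I would invoke Hoeffding's lemma (a zero-mean random variable supported in an interval of length $\ell$ has MGF bounded by $\exp(c^2\ell^2/8)$)---this is exactly the Hoeffding concentration bound \eqref{eq:Hoeffding} already used in Section~\ref{sec:bounded-obs}---to conclude $\E_{P^n}[e^{cD_k}\mid X_1,\ldots,X_{k-1}] \le \exp(c^2 d_k^2/8)$. Peeling off the innermost conditional expectation and iterating $n$ times via the tower property then yields
\[
\E_{P^n}\!\left[e^{c(\psi-\E_{P^n}[\psi])}\right] = \E_{P^n}\!\left[e^{c\sum_{k=1}^n D_k}\right] \le \exp\!\left(\frac{c^2}{8}\sum_{k=1}^n d_k^2\right),
\]
which is precisely \eqref{eq:mcdiarmid-res}. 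Sub-Gaussianity in the sense of \eqref{eq:sub-Gaussianity} follows immediately with $\sib^2 = \tfrac{1}{4}\sum_k d_k^2$.

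The main obstacle, in my view, is the bounded-range argument for $D_k$: one must carefully use independence of the $X_j$ to justify that averaging $\psi$ over the ``future'' coordinates $X_{k+1},\ldots,X_n$ preserves the coordinatewise Lipschitz constants $d_k$, and that averaging over the ``past'' coordinates does not enlarge the conditional range. Once that structural fact is pinned down, Hoeffding's lemma and the tower property combine routinely.
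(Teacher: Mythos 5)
Your argument is correct and is essentially the standard proof the paper defers to (it cites Raginsky--Sason, Ch.~2, and McDiarmid 1989 rather than proving the proposition itself): the Doob martingale decomposition, the conditional bounded-range estimate via independence, Hoeffding's lemma applied conditionally, and the tower property. No gaps; the identification $\sib^2=\tfrac14\sum_k d_k^2$ for sub-Gaussianity in the sense of \eqref{eq:sub-Gaussianity} is also the intended one.
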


By combining the bound in (\ref{eq:mcdiarmid-res})
with the definition of $\Upm \eta\Fp$ in 
Theorem~\ref{thm:generalized-go-bounds} for the sub-Gaussian 
case \eqref{eq:sub-Gaussianity} we obtain immediately

\begin{theorem}
\label{thm:McDiarmidConc}
For  $\nn X$ and  $\psi(\nn x)$ 
as in Proposition \ref{thm:McDiarmid} we have 
\begin{equation}
\begin{aligned}
&\left|\E_{P^n}[\psi(\nn X)]  - \E_{Q^n}[\psi(\nn X)]\right|  = & \left( \sum_{k=1}^{n}d_k^2\right )^{1/2} \sqrt{2 \sum_{k=1}^n\R{Q_i}{P_i}}\,.
\end{aligned}
\end{equation}
If $\nn X$ are identically distributed with  common distribution $P$ 
and if there exists a constant $C$ such that 
\begin{align*}
d_k\leq \frac{C}{n} \,,\quad k=1, \ldots, n 
\end{align*}
then we have for any $n$
\begin{equation}
\label{eq:upm-mcd-prod}
\begin{aligned}  
  &\left|\E_{P^n}[\psi(\nn X)]  - \E_{Q^n}[\psi(\nn X)]\right| 
  \le  &C\sqrt{2 \R{Q}{P}}.
\end{aligned}
\end{equation}
\end{theorem}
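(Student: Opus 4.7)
The plan is to combine Proposition~\ref{thm:McDiarmid} (the McDiarmid/Hoeffding--Azuma MGF bound) with the sub-Gaussian instance of Theorem~\ref{thm:generalized-go-bounds} and then tensorize the KL divergence. First, Proposition~\ref{thm:McDiarmid} asserts that under the bounded-differences hypothesis \eqref{eq:mcdiarmid-cond}, the centered estimator $\tilde\psi$ satisfies
\begin{equation*}
M_{P^n}(c;\tilde\psi) \le \exp\!\left(\tfrac{c^2}{8}\sum_{k=1}^n d_k^2\right),
\end{equation*}
so $\psi$ belongs to the admissible family $\mathcal{F}_{P^n}$ for the sub-Gaussian concentration bound \eqref{eq:sub-Gaussianity} with an appropriate $\sigma_B$ proportional to $\bigl(\sum_k d_k^2\bigr)^{1/2}$.

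Second, I invoke Theorem~\ref{thm:generalized-go-bounds} in the sub-Gaussian form. By the explicit formula \eqref{eq:sub-G-Upm}, with the baseline model $P^n$, the alternative $Q^n$, and tolerance $\eta^2 = R(Q^n\|P^n)$, the bound \eqref{eq:U-bounds-bias} yields
\begin{equation*}
\bigl|\E_{Q^n}[\psi] - \E_{P^n}[\psi]\bigr| \le \sigma_B \sqrt{2}\, \eta
\;\;\text{with $\sigma_B$ read off from the Proposition's MGF bound.}
\end{equation*}

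Third, I use the tensorization identity for KL divergence on product measures, $R(Q_1\times\cdots\times Q_n \,\|\, P_1\times\cdots\times P_n) = \sum_{i=1}^n R(Q_i\|P_i)$, which is a direct computation from the definition \eqref{eq:KL} using $\log\prod = \sum \log$ together with Fubini's theorem under the product structure. Substituting into the sub-Gaussian bound then delivers the first inequality of the theorem in terms of $\sum_k d_k^2$ and $\sum_i R(Q_i\|P_i)$.

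Finally, for the i.i.d. specialization, I plug $d_k \le C/n$ into $\sum_k d_k^2 \le C^2/n$ and use $R(Q^n\|P^n)=nR(Q\|P)$, so that the $n$-dependence cancels between the $1/\sqrt{n}$ coming from the Lipschitz constants and the $\sqrt{n}$ coming from the tensorized KL, producing the desired $n$-independent bound $C\sqrt{2R(Q\|P)}$. There is no real obstacle here: the theorem is essentially a packaging of Proposition~\ref{thm:McDiarmid} with Theorem~\ref{thm:generalized-go-bounds} via the elementary tensorization of $R(\cdot\|\cdot)$; the conceptual point worth emphasizing is that the cancellation in the i.i.d.\ case is exactly the scalability property (b) advertised in the introduction, which standard inequalities such as Pinsker's fail to achieve.
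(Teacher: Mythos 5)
Your argument is exactly the paper's proof: the authors likewise obtain Theorem~\ref{thm:McDiarmidConc} ``immediately'' by feeding the McDiarmid MGF bound \eqref{eq:mcdiarmid-res} into the sub-Gaussian case \eqref{eq:sub-G-Upm} of Theorem~\ref{thm:generalized-go-bounds} and tensorizing the KL divergence, $R(Q^n\|P^n)=\sum_i R(Q_i\|P_i)$. The only detail worth recording is that matching $\exp\bigl(\tfrac{c^2}{8}\sum_k d_k^2\bigr)$ to $\exp(c^2\sigma_B^2/2)$ gives $\sigma_B=\tfrac12\bigl(\sum_k d_k^2\bigr)^{1/2}$, so your route actually yields a bound smaller by a factor of $2$ than the first display of the theorem (whose ``$=$'' should in any case read ``$\le$''), and the i.i.d.\ conclusion \eqref{eq:upm-mcd-prod} then follows a fortiori.
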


Next, we apply  these results towards obtaining model bias bounds for statistical estimators.  

\sepeq{CDF estimator:} If $X$ is a real-valued with cumulative distribution function (CDF) $F_P(x)=P\{X \le x\}$ then given 
i.i.d. data $\nn X$  
\begin{align}
{\hat F}_n(x) = \frac{1}{n}\sum_{k=1}^n I_{\{X_k \le x\}} \,,
\end{align}
where $I_A$ is the indicator function of the set $A$, is an 
estimator for the CDF $F_P=F_P(x)$. It is easily verified that the 
conditions of Theorem \ref{thm:McDiarmidConc} are satisfied with 
$C=1$. Since the bound is uniform in $x$, and ${\hat F}_n(x)$ is an unbiased estimator of $F_P(x)$, we obtain 
\begin{equation}
\label{eq:CDFest}
\begin{aligned}
 &{\sup_{x} \left| F_Q(x)-F_P(x)\right|}
 =&\sup_{x} \left| \E_{Q^n}\left[{\hat F}_n(x)\right]-\E_{P^n}\left[{\hat F}_n(x)\right]\right| 
 \le &\sqrt{2 \R QP} \, ,
\end{aligned}
\end{equation}
for any alternative model $Q$ to the baseline $P$. As we also note in the sample variance example below, the estimator does not need to be unbiased. 


\sepeq{Sample variance and general statistical estimators:} The McDiarmid's inequality and the
condition \eqref{eq:mcdiarmid-cond} can be used to control bias 
of QoIs which are not simply expected values, for example the sample
variance 
\begin{equation}
\begin{aligned}
 V_n(\nn{X})&=\frac{1}{n-1}\sum_{i=1}^{n}\left(X_i-\frac{1}{n}\sum_{j=1}^n X_j\right)^2
 &=\frac{1}{2n(n-1)}\sum_{i,j=1}^{n}(X_i-X_j)^2.
\end{aligned}
 \label{eq:sample-var}
\end{equation}
If we assume that $|X_i|\le M$ for some $M>0$ then we have 
\begin{align*}
 \sup_{\substack{|x_i|\le M, \\ |x_k'|\le M}} |V_n(x_1,\ldots,x_k,\ldots,x_n)-V_n(x_1,\ldots,x_k',\ldots,x_n)|
 \leq  \frac{8M^2}{n-1} \,. 
 \end{align*}
Then the sample variance satisfies \eqref{eq:mcdiarmid-cond} with $d_k=8M^2/(n-1)$ for all $k$. Thus, we can bound the corresponding model bias by
\begin{equation}
  \label{eq:model-bias-sample-variance}
  \begin{aligned}
    |\Vp[X]-\Vq[X]|&=|\E_{P^n}[V_n]-\E_{Q^n}[V_n]| 
    &\leq 8M^2 \frac{n}{n-1}\sqrt{2\R{Q}{P}}.
  \end{aligned}
\end{equation}
valid for all $n>1$. Note that if take $n \to \infty$ we obtain 
the variance bound 
\begin{align*}
 |\Vp[X]-\Vq[X]|\leq 8M^2 \sqrt{2 \R QP}
\end{align*}
which shows how KL-divergence $\R{Q}{P}$ control the misspecification 
for QoIs beyond expected values.  The same analysis also applies to the (biased) plug-in estimator for the variance, namely
$$
\tilde V_n(\nn{X}):=\mbox{var}_{{\hat F}_n}[X]=\frac{1}{n}\sum_{i=1}^{n}\left(X_i-\frac{1}{n}\sum_{j=1}^n X_j\right)^2\, .
$$

Finally, we can easily generalize the sample variance calculation to more 
general QoIs and statistical estimators. The sample variance depends (up to a factor $\frac{n-1}{n}$) only on the two sample averages  $\frac{1}{n}\sum_{i=1}^n X_i$ and $\frac{1}{n}\sum_{i=1}^n X_i^2$. It is not difficult to 
see that if $|X_i|\le M$ and the QoI has the form 
\begin{equation}
\label{eq:plugin:general}
\begin{aligned}
\psi_n(\nn X) = g\left(\frac{1}{n}\sum_{i=1}^n f_1(X_i), \ldots,   \frac{1}{n}\sum_{i=1}^n f_k(X_i) \right)
\end{aligned}
\end{equation}
for some $f_1, \cdots f_k$ (say the the first $k$ moments),   
and for some Lipschitz continuous function $g$, then one can apply Theorem \ref{thm:McDiarmidConc} for a constant $C$ which depends on $M$, the Lipschitz constant for $g$ and $f_1, \cdots f_k$.  One important example 
of the type \eqref{eq:plugin:general} is the sample correlation, we refer to Example 2.16 in \cite{wasserman2010}.

\sepeq{Confidence Bands and 
Model Bias}
To further  illustrate our results we construct 
a non-parametric confidence band for the CDF
$F_Q(x)$.  We combine the bound 
\eqref{eq:CDFest} with the Dvoretzky-Kiefer-Wolfowitz (DKW) inequality \cite{wasserman2004,wasserman2010}, 
i.e.  the  bound  
\begin{equation}
\label{eq:DKW:ineq}
P \left\{ \sup_{x}|{\hat F}_n(x) - F_P(x)|\ge \epsilon\right\} \le 2 e^{-2 n \epsilon^2} \, , 
\end{equation}
which itself is obtained though concentration inequalities.
For any $n$ and $\alpha > 0$, we set $\epsilon_n= \sqrt{\log(2/\alpha)/2n}$
and 
\begin{equation}
\begin{aligned}
L_n(x; \eta) = \max\{ {\hat F}_n(x) - \sqrt{2}\eta - \epsilon_n \,,\, 0\}\, \\ 
U_n(x; \eta) = \min\{ {\hat F}_n(x) + \sqrt{2}\eta + \epsilon_n \,,\, 1\} \, .
\end{aligned}
\end{equation}
Since ${\hat F}_n(x)$ is an unbiased estimator for the baseline model $P$ 
rather than for the (unknown) ``true" model $Q$
we obtain the $\alpha$--confidence band for $F_Q(x)$:
\begin{equation}
\label{eq:confidence:bias}
\begin{aligned}
P \left\{ L_n(x;\eta) \le F_Q(x) \le U_n(x;\eta) \textrm{ for all } x \right\}  \ge   1- \alpha \,,\quad \mbox{for all} \quad Q \in \mathcal{Q}_\eta\, .
\end{aligned}
\end{equation}
Due to the fact that both our bound \eqref{eq:CDFest} and the DKW 
inequality \eqref{eq:DKW:ineq} are valid for any data size $n$, the confidence band \eqref{eq:confidence:bias} does not require any asymptotic normality assumptions or a large data set  $n \gg 1$.

\sepeq{Connections to the Vapnik-Chervonenkis inequality}
The DKW inequality is an effective  tool for controlling deviations from the average  for one dimensional distributions  and their corresponding CDFs. However, the Vapnik-Chervonenkis (VC)
theory~\cite{wasserman2010} allows us to address the same issues in a more general setting  that is applicable to higher-dimensional distributions,  by considering the empirical probability distribution instead of the CDF. In particular,  corresponding inequalities to    \eqref{eq:DKW:ineq}, but for the empirical probability distribution, can be derived based on the  VC theory, see for instance Theorem 2.41 and Theorem 2.43 in~\cite{wasserman2010}. In turn the VC inequalities, along with our concentration information bounds \eqref{eq:upm-mcd-prod}  can allow us to obtain 
confidence intervals for  higher dimensional  distributions, similarly to   \eqref{eq:confidence:bias}.

\begin{remark}
\label{rem:poor:scalab}
[Poor scalability of certain information inequalities]
{\rm
A notable feature of the concentration/information inequalities is that they scale independently of  the number of data/random variables $n$, at least for classes of QoIs that satisfy \eqref{eq:mcdiarmid-cond}, as   demonstrated  in Theorem~\ref{thm:McDiarmidConc} and the subsequent examples. Furthermore, the bias bound \eqref{eq:upm-mcd-prod} remains  discriminating even if  $n \to \infty$.
The same scaling features are also shared with the GO divergence bounds \eqref{eq:intro:bias-bound-go:sec2}, see \cite{katsoulakisjcp2017}. 
On the other hand, classical information inequalities scale poorly with $n$. For example, in the case of the Pinsker inequality \cite{Cover2006ElementsofInformationTheory,tsybakov},  let us consider  the QoI (estimator)  \eqref{eq:plugin:general} for the i.i.d. random variables $\nn X$,  
$$
\psi_n(\nn X) = \frac{1}{n}\sum_{i=1}^n f(X_i)\, .
$$
Then, the Pinsker inequality becomes
\begin{equation}
\label{eq:Pinsker}
\begin{aligned}
& \left|\E_{P^n}[\psi(\nn X)]  - \E_{Q^n}[\psi(\nn X)]\right|  
 \le  & \| f \|_{\infty}\sqrt{2  R(Q^n \mid \mid P^n)}=O(\sqrt{n})\, ,  
\end{aligned}
\end{equation}
where we used that
$\| \psi \|_{\infty}=\| f \|_{\infty}$, and  
$
R(Q^n \mid \mid P^n)= n R(Q \mid \mid P)
$.
Therefore the Pinsker bound \eqref{eq:Pinsker} blows up
as $n>>1$, in contrast to the concentration/information inequality \eqref{eq:upm-mcd-prod} that remains discriminating and informative for any $n$.  
Other  model bias bounds  based on the  Renyi or  $\chi^2$ divergences (the latter known as the Chapman-Robbins inequality) or the Hellinger metric,   also scale poorly with   the size of data set   and/or with the number of variables $n$; we refer to  Sections~2.2--2.3 in \cite{katsoulakisjcp2017} for a complete discussion.
}
\end{remark}

\section{Elementary examples}
\label{ex:simple-examples}
Prior to discussing  applications involving more complex models in Section~\ref{sec:epistemic-UQ-conc}, here we demonstrate  the concentration/information inequalities we developed earlier to two  elementary  examples that allow easy analytic and computational implementations.

\subsection{Exponential distribution}
\label{ex:expo-dist}
We first consider the model bias  bounds using the GO divergence in  Theorem~\ref{thm:GO-bounds}, contrasted to the concentration/information divergence in Theorem~\ref{thm:generalized-go-bounds}.
In our first example,  the baseline model $P$ is  an exponential distribution. The
models $Q$ can be any distributions which are  absolutely continuous with respect
to $P$, hence $\R QP < \infty$.
Let $P$ be an exponential distribution with parameter
$\lambda_P=1$. The QoI is $f(X)=X$. The MGF of $P$ is $\Mp{c;X}=1/(1-c)$ and
thus it is finite in $(0,1)$, while otherwise it is infinite. Next, we let  $\eta$  be a model  uncertainty threshold and $Q$ any
distribution, not necessarily exponential or in any parametric family,  such that $\R QP\leq \eta^2$. 
%
We note that the distribution $P$ exhibits sub-exponential behavior, namely 
\begin{align}
  \label{eq:sub-expo}
  \begin{aligned}
  \Mp{c;f}=1+c+\frac{c^2}{1-c}  
  \leq 1+c+2c^2  
  \leq \exp(c+c^2/(2\sib^2)) :=\Phip{c}\, , \quad c \in (-0.5,0.5)\, ,
  \end{aligned}
\end{align}
where $\sib=1/2$ and the interval $(-0.5,0.5)$ is selected so that the bounds remain finite.  In general,  if we different  information on the location of
$\lambda_P$, \eg from data, then we can adjust the interval that $c$ lies in
accordingly. Here, the  concentration/information bound \eqref{eq:U-divpm} is then adjusted according to \eqref{eq:sub-expo},  using Theorem~\ref{thm:generalized-go-bounds} and 
the general concentration bound
\eqref{eq:conc:mgf-bound:thm}.
Although the MGF is known in this particular example, the use of the
concentration bound \eqref{eq:sub-expo} allows us to quantify the worst-case model bias for all QoIs
$g \in \Fp$, where
\begin{equation}
\label{eq:Fp-expo}
\begin{aligned}
  \Fp=\{g: \Mp{c;\ti g}\leq \Phi(c), \
  c\in (-0.5,0.5),\ \sib^2\le 1/4\}.
\end{aligned}
\end{equation}
Figure~\ref{fig:expo-dist} is a comparison of the GO-divergence and the concentration/information bound based on 
\eqref{eq:sub-expo},
 along with the
exact model bias for the case that $Q$ is also an exponential distribution with $\R
QP\leq \eta^2$ and $\eta\in [0,1.6]$.
\begin{figure}[!t]
  \centering
  \includegraphics[width=0.499\textwidth]{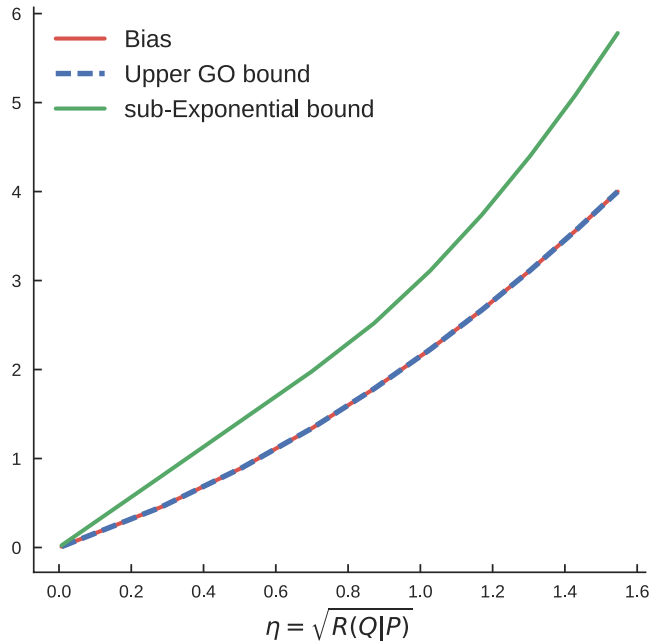}
  \caption{\label{fig:expo-dist} Comparison of model bias bounds
  based on   the GO divergence and the concentration/information 
\eqref{eq:sub-expo},
   with the exact model bias
    $1-\lambda_Q$, $\lambda_Q\in (1.01,10)$. The sub-exponential bound~(\ref{eq:sub-expo}) is less
    sharp as the KL divergence  increases, since  it  captures the worst-case
    performance over the family of QoIs $\Fp$. Although $\R QP$ is computed with
    $Q$ being an exponential distribution, the bounds to the  model bias are valid for
    any $Q$ that is absolutely continuous with respect to $P$ and has $\R QP$ in
    the range of the figure. }
  \label{fig:expo-comparison}
\end{figure}

Finally, we can consider other types of tail decay, and thus corresponding concentration inequalities,  besides the sub-Gaussian and the sub-exponential cases discussed thus far. 
For example, we can also consider   Poisson-type tail decay, see for instance  
\cite[Section 3.3.5]{raginsky2013concentration} 
and~\cite{boucheron2009concentration}.

\begin{remark}
{\rm
  The bias is an unbounded function of the KL divergence in this example---a
  consequence of the QoI $f(X)=X$ being unbounded under $P$. Therefore, any
  decrease in KL divergence translates to an improvement in worst-case model bias,
  see Figure~\ref{fig:expo-dist}; this fact is in
  sharp contrast with the truncated Normal example in 
  Section~\ref{ex:truncated-normal}, where even large   improvements to larger values of
  the KL divergence may not help much in reducing model bias, see Figure~\ref{ex:truncated-normal}.
}
\end{remark}

\subsection{Truncated Normal}
\label{ex:truncated-normal}

In this example the distributions we consider are bounded, allowing us to deploy the hierarchy of concentration/information bounds \eqref{eq:ordering}
developed in Section~\ref{sec:bounded-obs}.
We assume the random variable $X$ follows the truncated Normal distribution, $P=TN(0,1, -1, 1)$,
where $[-1,1]$ is the interval of support.  Here we will bound the model bias, $\bias$, for any $Q$
such that $\R QP=\eta^2$, where $\eta\in [0.01,1]$ and for any $f$ in a suitable family
of QoIs, $\Fp$. Apart from these, the bounds make no other assumptions on $Q$, $f$.
Figure~\ref{fig:compare-trunc-normal-bounds} contains a comparison of the
different concentration/information bounds \eqref{eq:ordering} from Section~\ref{sec:bounded-obs}. 
\begin{figure}[!t]
  \centering
  \includegraphics[width=0.5\textwidth]{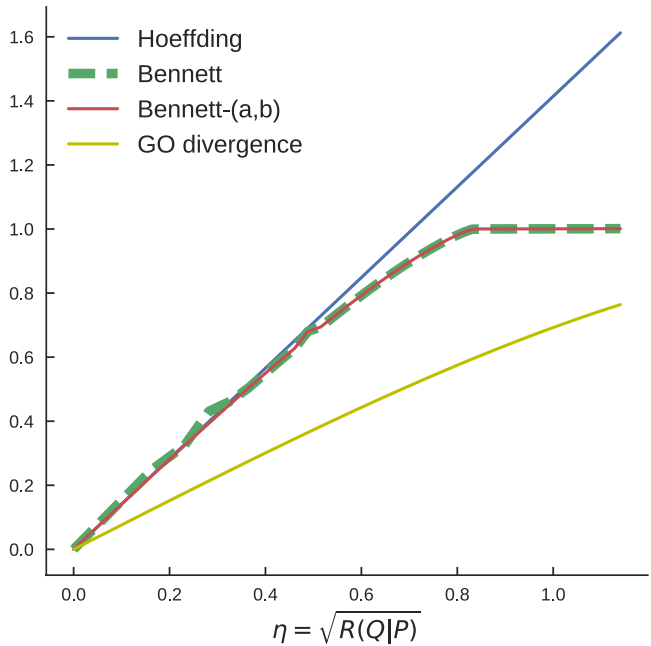}
  \caption{Comparison of the different bounds for the bias in the truncated
    Normal example (see Section~\ref{ex:truncated-normal}), assuming that the
    observable of interest is $f(X)=X$. This plot makes no assumptions on the
    form of $Q$ except that $\R QP=\eta^2\in (0.0,4.0)$.
    As in Figure~\ref{fig:expo-comparison}, here  the concentration/information bounds capture the worst-case
    performance over the family of QoIs $\Fp$, hence perform worse than the GO divergence bounds which are suitable only for a single QoI, see also \eqref{eq:ordering}.
    Notice that Bennett and
    Bennett-$(a,b)$  track  better the bound of the GO divergence for large
    values of the KL whereas the Hoeffding is  sufficient  only for small values of
    the KL, \ie at the linearized regime of the GO bounds. Only the upper bounds
    for the bias are shown here.}
  \label{fig:compare-trunc-normal-bounds}
\end{figure}

As a general observation, we notice that for large values $\eta=\sqrt{\R QP}$, small perturbations of
$\eta$ will not change the Bennett/GO (see Relations~(\ref{eq:intro:godiv:2})
and~(\ref{eq:bennet-mgf-ineq})) bounds significantly. Therefore, for some
QoIs, \eg $f(X)=X$, small improvements to large values of the KL will barely
improve the worst-case bias (as captured by the bounds, see
Figure~\ref{fig:compare-trunc-normal-bounds}). The existence of such QoIs is
guaranteed by the sharpness of the bounds demonstrated in
Section~\ref{sec:sharpness-and-robustness}. Finally, we not that even for the tighter concentration/information 
bounds, \ie the ones associated with the  two Bennett bounds  (\ref{eq:bennet-mgf-ineq}) and 
(\ref{eq:bennet-(a,b)}), there is some discrepancy with the GO divergence bound. This discrepancy is due to the fact that   the GO bound is applied only for a specific QoI, while the
concentration/information  bounds are tight over the broad classes of QoIs defined in  Section~\ref{sec:bounded-obs},
see also Remark~\ref{rem:size:Fp}.
 
 


\section{Epistemic Uncertainty Quantification via Concentration/Information Inequalities}
\label{sec:epistemic-UQ-conc}
In this Section, we apply the concentration/information inequalities to control model bias between baseline and alternative models in two
more complex examples. The  type of model bias considered here arises  in epistemic uncertainty quantification, where modelers are unsure if their baseline model included all necessary complexity or lacks sufficient data, \cite{Smith, Sullivan2015}. The KL divergence and in particular the GO divergence bounds provide a non-parametric framework to mathematically describe this type of epistemic uncertainties, as first shown in \cite{dupuis2011uq}.
Here, we consider two such examples that illustrate different aspects of epistemic uncertainty, namely a data-driven model for the lifetime of lithium batteries, as well as  a high-dimensional Markov Random Field model  subject to various localized uncertainties  such as local defects. 
A key aspect of our discussion in both examples 
is the necessity and the (ease of) implementation of concentration/information model bias bounds,
see for instance Remark~\ref{battery_remark}.

\subsection{Epistemic Uncertainty for Failure Probabilities}
\label{ex:battery-example}

Here, we apply the   bounds of Theorem~\ref{thm:generalized-go-bounds}, and in particular
the inequalities in Section~\ref{sec:bounded-obs}, to the life-time  analysis of lithium secondary
batteries. Firstly, we introduce the Weibull distribution which is widely used
in for analyzing life-time data, see \cite{eom2007life} and references therein. The  probability density function of a Weibull random
variable is
\begin{equation}
  f(t)=\frac{\beta}{\xi}\left (\frac{t}{\xi}\right)^{\beta-1}e^{-(\frac{t}{\xi})^\beta}, t>0,
\end{equation}
where $\beta > 0$ is called a shape parameter and $\xi > 0$ is a scale parameter of the
distribution \cite{weibulldef}. The shape parameter explains the types of failure and the scale
parameter explains the characteristic life cycle of devices. The cumulative
distribution function $F$ can be expressed as:
$$F(T)=1-e^{-(\frac{T}{\xi})^\beta}\, ,$$
where $T$ denotes the time of failure (or the lifetime) of the battery.
\begin{table*}[!t]
  \caption{Failure times of test samples\cite{eom2007life} }
  \centering
  \begin{tabular}{c c c c c c c c c c c c c}
    \hline\hline                   
    Specimen number& 01 & 02 & 03 & 04 & 05 & 06 & 07 & 08 & 
                                                             09 & 10 & 11 & 12 \\[0.5ex]  \hline      
    Failure time & 1373  & 1470  & 1520  & 1427  & 892  & 814  & 777  & 637 & 927  & 688  & 857  & 866 \\ [1ex] 
    \hline
    \hline
    
  \end{tabular}
  \label{Table: Battery_data}
\end{table*}

In Table~\ref{Table:
  Battery_data}, experimental data based on life cycle  tests are obtained from
\cite{eom2007life}. By fitting the data in Table~\ref{Table: Battery_data} to
the parameters of the Weibull distribution, we obtain the corresponding maximum
likelihood estimator (MLE) for
$\xi$ and $\beta$ are $\hat{\xi}=1138$ and $\hat{\beta}=3.55$, respectively.
Now, we consider this MLE Weibull distribution as the baseline model
$P$, which is a data-driven  approximation to the unknown true model.
Next we consider the family of alternative models within a fixed
tolerance $\eta^2$, namely the non-parametric family of models  $\efam$, see \eqref{Qfamily}. This family  accounts for unknown features not necessarily captured in
the baseline model which  was  arbitrarily assumed
to be Weibull.   Furthermore, the family $\efam$ can account for perturbations in the
baseline model---constructed based on the specific dataset in Table~\ref{Table: Battery_data}---due to additional data that may become available or for any errors in the  data used in the MLE step.

Next, we assess the impact of model uncertainty within the family of models $\efam$ on two QoIs associated with lifetime probabilities of the batteries:
\begin{align}
  f_1(t)&=1_{\{0\leq t\leq T\}}(t),t>0,\label{eq:ind-fun}\\
              f_2(t;w)&=\frac{1}{1+e^{w(t-T)}}, t>0 \label{eq:approx-ind-fun}.
\end{align}
The function $f_2(t;w)$ is a commonly used smooth approximation to the indicator
function $f_1(t)$ and is usually referred as  the logistic function, see Section~39.1 of
\cite{MacKay:2003}). The parameter $w$, $w\geq 1$, controls the smoothness of
the approximation.
The QoI  for  the life-time probability is defined exactly as $F_P[T]:=E_P[f_1(t)]=P(0\le t \le
T)$ or through the smooth approximation $\Ep[f_2]$. 


Since the QoI $f_1(t)$ is bounded in  $[0,1 ]$, we can  apply the
Bennett~\eqref{eq:bennet-mgf-ineq}, Bennett-(a,b)~\eqref{eq:bennet-(a,b)} and Hoeffding bounds~\eqref{eq:Hoeffding}) to obtain the uncertainty region, where $a=0$,
$b=1$, $\tilde{a_1}=-F_P(T)$, $\tilde{b_2}= 1-F_P(T)$ and
$\sigma_B^2=Var_P[f_1(t)]$, the latter needed just in the Bennet bound. For $f_2$, we estimate $\Ep[f_2]$ by sampling from
$P$, thus computing  $\mu_2=\Ep[f_2]$, needed in both Bennett bounds. Then, $\ti a_2=-\mu_2$ and $\ti
b_2=1-\mu_2$. In Figure \ref{Fig:Battery} we  compare the lifetime
probabilities given by $f_1$ and $f_2$, where for the latter we set $w=5$. 
In this Figure, we also observe that  the logistic function $f_2$
gives a good approximation of the indicator function $f_1$ since lifetime
probabilities based on them are almost the same. Moreover, we set $\eta^2=0.1$
and also plot the GO divergence bounds of  Theorem~\ref{thm:GO-bounds} based on $f_1$ and Bennett-(a,b) bounds based on $f_2$. We
notice that the bounds almost coincide. We also consider the Bennett-(a,b) bounds
based on a smaller tolerance $\eta^2=0.01$. As we see in the figure, we obtain a significantly 
narrower model bias region.
\begin{figure}
  \centering
  \includegraphics[width=0.5\textwidth]{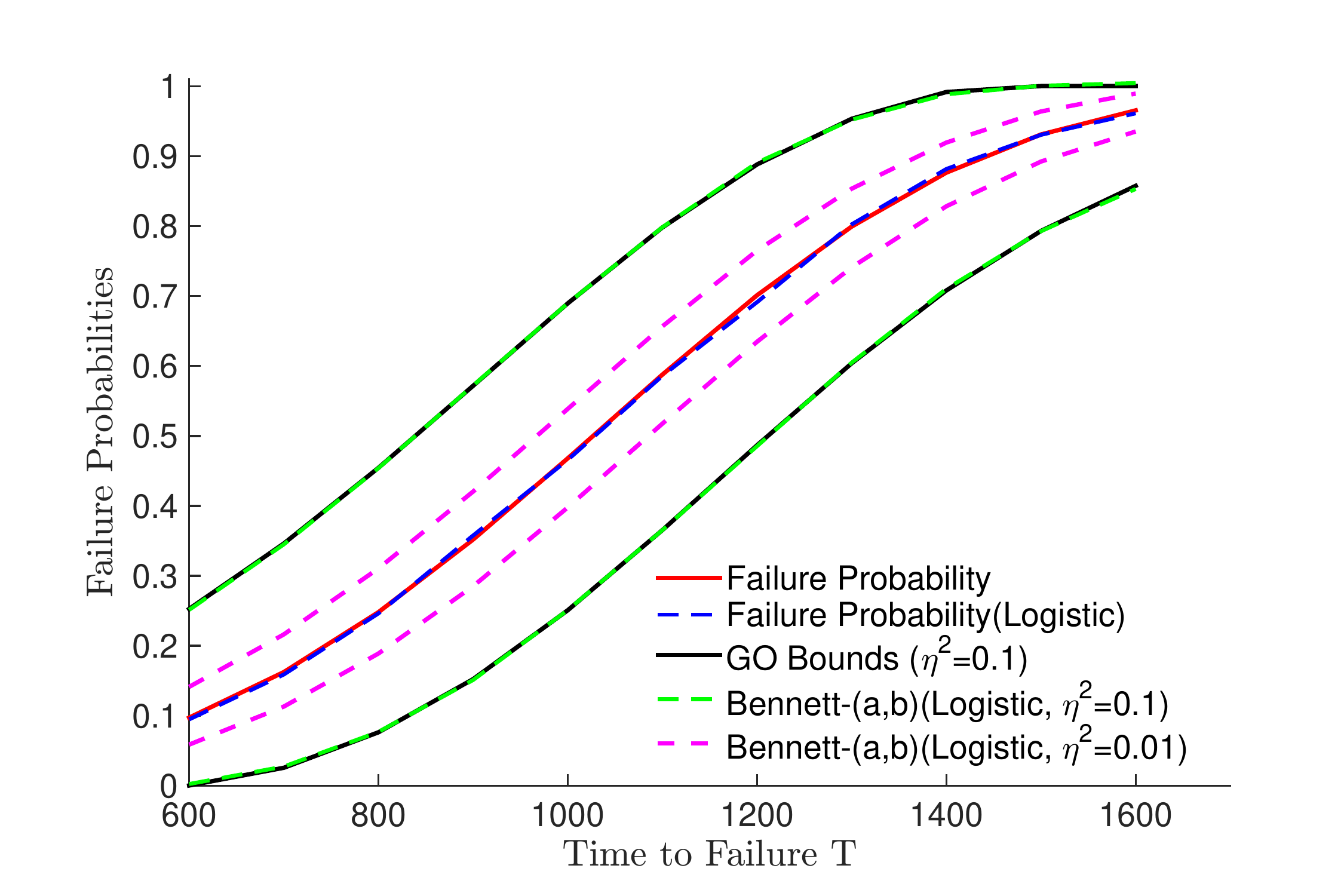}
  \caption{ The blue line is the failure probability based on the logistic function
    $f_2$; The red line is the failure probability based on the indicator function $f_1$; The black
    lines are the GO bounds based on $f_1$ with $\eta^2=0.1$; The green lines
    are the Bennett-(a,b) bounds based on $f_2$ with model uncertainty $\eta^2=0.1$.
    The magenta lines
    are the Bennett-(a,b) bounds based on $f_2$ with $\eta^2=0.01$.}
  \label{Fig:Battery}
\end{figure}

\begin{remark}[Why concentration/information inequalities?]
  \label{battery_remark}
  {\rm 
  As shown in Lemma~2.11, Equation~(2.28) of \cite{katsoulakisuq16}, the $c^*$
  that solves the optimization problem of the GO divergence bound in
  Equation~(\ref{eq:intro:godiv:2}) behaves like
  \begin{align}
    \label{eq:cstar}
    c^*=c_1\eta + O(\eta^2), 
  \end{align}
  for some explicit constant $c_1$ and $\eta^2=\R QP$. Due to~\eqref{eq:cstar} and since estimator variance for the MGF increases
  exponentially with $c$, a larger
  uncertainty threshold $\eta$ will quickly make the accurate estimation of $\Mp{c^*;f}$
  more demanding, as is readily clear from Table~\ref{tab:estimator-cost}
  and \eqref{eq:cstar}. This drawback becomes  especially problematic  when sampling from $P$ is
  computationally expensive, \eg requires MCMC sampling, $P$ is multi-modal,
  \etc, see also the Markov Random Field example in Section~\ref{sec:loc-pert-stat-mod}, where
  sampling challenges can become more pronounced in higher dimensions. Even when $P$ is simple to sample,
  as is the case with the baseline models in \cite{li2012computation} and
  here, 
  avoiding the estimation of
  $\Mp{c^*;f}$ in the GO divergence can still save significant computational time, as Table~\ref{tab:estimator-cost} strongly suggests.  For instance, the Bennett-(a,b)
  bound in~(\ref{eq:bennet-(a,b)}) only requires
  \begin{enumerate*}
  \item[(a)]  the bounds of the QoI, $a,b$, and 
  \item[(b)]  the expected value of the QoI with respect to $P$.
  \end{enumerate*}
  }
\end{remark}

\subsection{Uncertainty Quantification for Markov Random Fields}
\label{sec:loc-pert-stat-mod}
Here we consider the impact on QoIs of localized perturbations to statistical probability distributions of Markov
Random Fields \cite{MacKay:2003} such as Gibbs measures. Such distributions are inherently high-dimensional, allowing us to focus on this aspect of model bias bounds.
In particular,  we consider Gibbs
measures for particle systems defined on  a fixed finite subset $\Lambda_N$ of the infinite dimensional lattice $\mathbb{Z}^d$. Specifically we consider  $\Lambda_N=\{x\in \mathbb{Z}^d,
|x_i|\le n\}$  the square lattice with $N=(2n+1)^d$ lattice sites, where typically $n\gg 1$. 
Before we describe the model, we will specify some necessary notation: we let $S$ be the
configuration space of a single particle at a lattice  site $x \in \mathbb{Z}^d$. For example in a lattice gas model $S=\{0, 1\}$, i.e. the lattice site can be empty or occupied,  and in a Potts model $S=\{0, 1,..., q\}$, i.e. the site is empty or occupied by particles of $q$ different species. In Ising magnetization models studied below, we have that  $S=\{-1, 1\}$, corresponding to down or up spins respectively. 
Then $S^X$ is the configuration space for the particles in any subset  $X \subset
\mathbb{Z}^d$; we denote by $\sigma_X \,=\, \{ \sigma_x\}_{x\in X}$ an element
of $S^X$.
                     %
Next, in order to
define a Gibbs measure on $\Lambda_N$, we first specify the Hamiltonian $H_N(\sigma_{\Lambda_N})$ of a set
of particles in the region $\Lambda_N$. An interaction $\Phi= \{\Phi_X: X \subset \mathbb{Z}^d, X\;\; 
\textrm{finite}\}$  associates to any finite subset $X$ a function
$\Phi_X(\sigma_X)$ which depends only on the particle configuration in $X$ and accounts for all particle interactions within $X$, see
\cite{simon2014statistical} for details. Given an interaction $\Phi$ we then
define the Hamiltonian $H^\Phi_N$ (with free boundary conditions) by
\begin{equation}\label{Hamiltonian:Sec5}
  H_N^\Phi(\sigma_{\Lambda_N})=\sum_{X \subset \Lambda_N}\Phi_X(\sigma_X),
\end{equation}
and Gibbs measure $\mu_N^\Phi$ by
\begin{equation}
\label{eq:Gibbs_exp}
  d\mu_N^\Phi(\sigma_{\Lambda_N}) = \frac{1}{Z_N^\Phi} e^{ - H_N(\sigma_{\Lambda_N})} dP_N(\sigma_{\Lambda_N}) ,
\end{equation}
where $P_N$ is the counting measure on $S^{\Lambda_N}$ and $Z_N^\Phi= \sum_{
  \sigma_{\Lambda_N}} e^{ - H_N(\sigma_{\Lambda_N})}$ is the normalization
constant, also known as the partition function, \cite{simon2014statistical}.

Here we consider classes of perturbed  models with corresponding  interaction $\Psi$ that includes only local perturbations to the interaction
$\Phi$, e.g. \textit{local defects} encoded in the interaction potential $J$, or localized perturbations to the external field $h$ in the example of the 
Ising-type Hamiltonian \eqref{eq:Ising:Hamilt}. We also note that defects of finite temperature 
multi-scale probability distributions
are a continuous source of interest in the
computational materials science community, see, for instance, \cite{Ortiz}; in fact,  
lattice probability distributions such as \eqref{eq:Gibbs_exp}, constitute an important class of
simplified prototype problems. In the case of localized perturbations to the
interaction $\Phi$ in~\eqref{Hamiltonian:Sec5}, the Hamiltonians scale
as follows:
$$
H_N^\Psi(\sigma_{\Lambda_N})=H_N^\Phi(\sigma_{\Lambda_N})+O(1)\, .
$$
Thus the corresponding relative entropy satisfies
\begin{equation}
\begin{aligned}
 R(\mu^\Psi_N\mid\mid \mu^\Phi_N)&=\log E_{\mu_N^\Psi}(e^{\Delta H})+E_{\mu_N^\Psi}( -\Delta H)\\&=O(1)\, ,
  \label{eq:Isng_KL}
\end{aligned}
\end{equation}
uniformly in the system size $N$, where we define $\Delta H=H_N^\Psi - H_N^\Phi$. However,
in most cases, we do not know the exact local perturbation as well as the perturbed Gibbs
measure $\mu_\Psi$. Instead,  based on\eqref{eq:Isng_KL} we can consider a family of perturbed models:
$$\efam=\{\mu_\Psi: R(\mu_\Psi\|\mu_\Phi)\le \eta^2 \}\, .$$
This family   will include  any
perturbation \textit{within that tolerance} $\eta^2$, for example: defects
located at different lattice sites, and of different magnitudes, as the scaling \eqref{eq:Isng_KL} demonstrates rigorously.
\begin{figure}[b]
\centering
\subfloat[]{\includegraphics[width=0.5\textwidth]{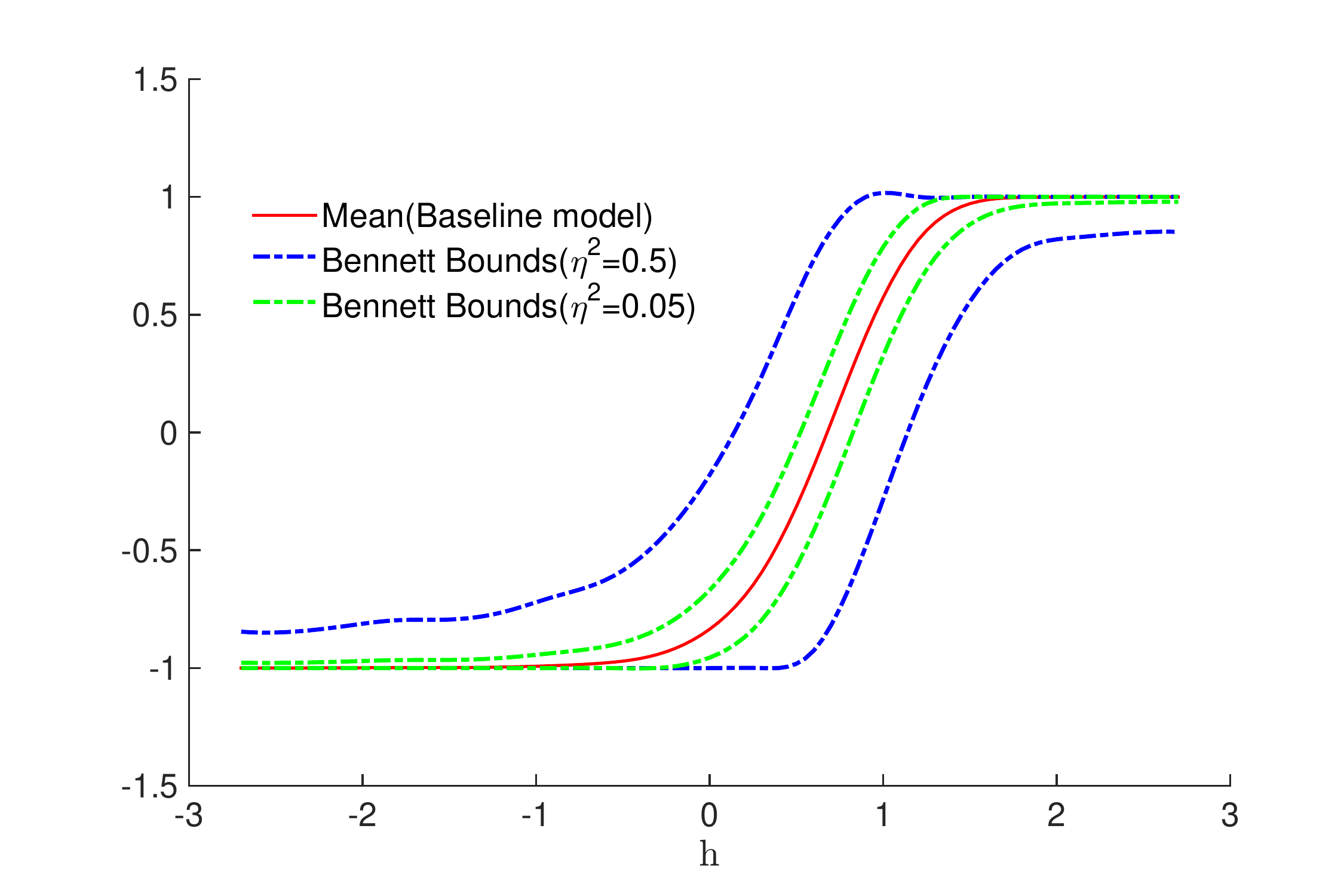}%
\label{Fig:Ising_KL1}}
\hfil
\subfloat[]{\includegraphics[width=0.5\textwidth]{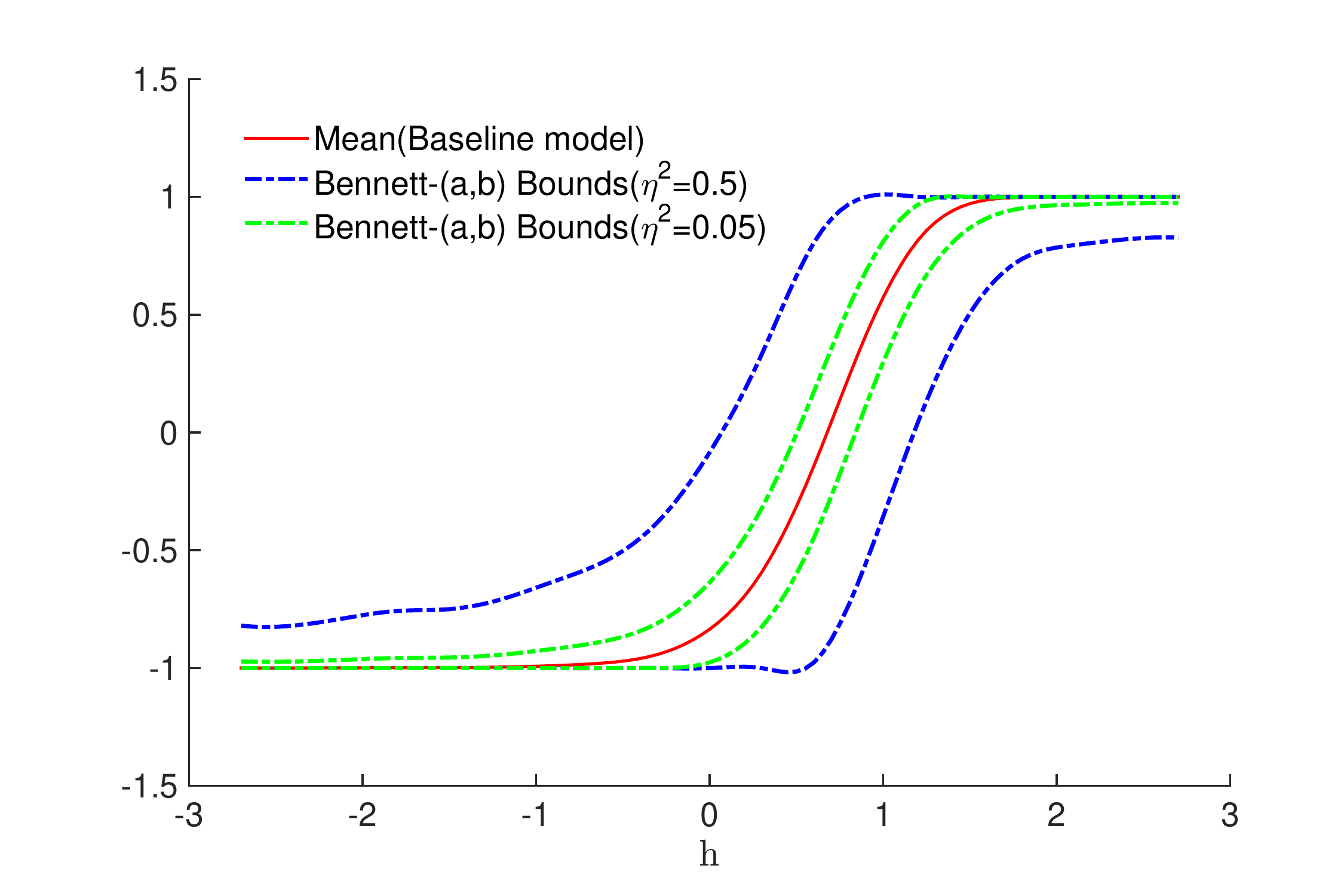}%
\label{Fig:Ising_KL2}}
\caption{\protect\subref{Fig:Ising_KL1}  The red line is the mean of the QoI \eqref{eq:Ising:observable} with $m=1$
for the baseline model \eqref{eq:Ising:Hamilt} with  $J=1$ and
    $\beta=1$.
    The green and blue lines are the Bennett bounds for $\eta^2=0.05$
    and $\eta^2=0.5$, respectively.
\protect\subref{Fig:Ising_KL2} The red line is the mean of the same QoI 
for the baseline model \eqref{eq:Ising:Hamilt} with  $J=1$ and
    $\beta=1$ ; The green and blue lines are the Bennett-(a,b) bounds for $\eta^2=0.05$
    and $\eta^2=0.5$, respectively. In both figures the lattice size is N=100. }
\label{Fig:Ising_KL}
\end{figure}

As a concrete  example of a Hamiltonian\eqref{Hamiltonian:Sec5}, we consider $\mu_\Phi$ to be  a one-dimensional  Ising model probability
distributions on the one-dimensional lattice $\Lambda_N$, labeled
successively by $x=1,2,...,N$. To each site corresponds a spin $\sigma(x)$, with two
possible values: $+1$ or $-1$. The Hamiltonian is given by
\begin{equation}
\label{eq:Ising:Hamilt}
  H^\Phi_{N}(\sigma_{\Lambda_N})=-\beta \sum_{x=1}^{N-1}J(x)\sigma(x)\sigma(x+1)-\beta h\sum_{x=1}^N\sigma(x).
\end{equation}
Using the  concentration/information  inequalities developed in Section~\ref{sec:conc-ineq}, we can obtain model bias  bounds for 
QoIs, such as the localized average around any lattice site $x$,
\begin{equation}
  f(\sigma_{\Lambda_N})=\frac{1}{2m+1}\sum_{\{y: |y-x|\le m\}} \sigma(y)\, ,
  \label{eq:Ising:observable}
\end{equation}
for a fixed radius $m$. In the demonstration below we we pick $m=1$ for concreteness. 
Since the QoI $f$ \eqref{eq:Ising:observable} is bounded, $-1 \le f \le 1$, we can use
the Bennett-(a,b) bound~\eqref{eq:bennet-(a,b)}. Alternatively, we can use the   Bennett bound~\eqref{eq:bennet-mgf-ineq}), which however requires estimating in addition to 
$E_{\mu_N^\Phi}[f]$, the variance  $Var_{\mu_N^\Phi}[f]$ by sampling from $\mu_\Phi$, see also Section~\ref{sec:bounded-obs}. The latter is not unreasonable given that variance computations are necessary in many applications because  they ensure suitable confidence intervals for the averaged QoIs. 
In Figure \ref{Fig:Ising_KL}, we implement
both Bennett and Bennett-(a,b) bounds by considering two different KL divergence tolerances, $\eta^2=0.5$, $0.05$.
A comparisons between   Figure \ref{Fig:Ising_KL1} and Figure \ref{Fig:Ising_KL2} indicates that Bennett  and Bennett-(a,b) bounds are fairly close for this example.

Notable computational advantages  of these concentration/information inequalities over  direct numerical simulation of  alternative models  $Q=\mu_\Psi$, 
as well as over the  GO divergence bounds  in Theorem~\ref{thm:GO-bounds} are the following: (1) when using Theorem~\ref{thm:generalized-go-bounds} along with Bennett-type bounds  \eqref{eq:bennet-mgf-ineq} or \eqref{eq:bennet-(a,b)}, we can deploy computational resources to estimate
$E_{\mu_N^\Phi}[f]$ or possibly $\mbox{Var}_{\mu_N^\Phi}[f]$---see also Table~\ref{tab:MGF-bounds}---just  for the baseline model $P=\mu_\Phi$,
instead of simulating all alternative models $Q=\mu_\Psi$ models; (2) we do not need to use 
the  full GO divergence bounds  in Theorem~\ref{thm:GO-bounds}, which  require potentially expensive  full MGF calculations,  also recalling Remark
\ref{battery_remark}.


\section{Conclusion}
In this paper we combined the uncertainty quantification  information  inequality of \cite{dupuis2011uq,katsoulakisuq16,katsoulakisjcp2017}  together with classical concentration inequalities \cite{concentration} to obtain  
easily implementable bounds for the model bias of quantities of interest (QoIs). The bounds control the model 
bias  in terms of the relative entropy 
between  different models and intrinsic statistical quantities associated to the QoIs in a baseline model, e.g. mean, variance,  $L^\infty$ bound. 
Our results improve substantially on  classical information bounds such as the Pinsker inequality. 
First, our bound scales correctly with the size of the data sets/number of degrees of freedom while 
classical inequalities do not, see Remark~\ref{rem:poor:scalab}. This scaling property is illustrated in Section \ref{sec:scalability} where we discuss  bias bounds for general statistical estimators.
In addition,  we demonstrate the tightness of our bounds in Sections \ref{sec:outline} and \ref{sec:sharpness-and-robustness}: given suitable families of QoIs and a family of models whose Kullback-Leibler divergence with respect to a given baseline   model is less than a tolerance $\eta^2$, there  always exists a QoI and models which saturate the upper and lower bounds. This demonstrates rigorously the precise  sense 
our model bias bound is optimal.  In forthcoming work we will apply and generalize our results to quantify model bias between different stochastic dynamics, e.g.  
Markov processes, in their  long-time regime,  bias   in  phase diagrams of Gibbs-Markov random fields, as well as model bias of coarse-grained models for  equilibrium and non-equilibrium molecular dynamics built via variational inference methods, \cite{Shell,KP2016}.


%

\appendices
\section{Proof of Theorem~\ref{thm:tightness}}
\label{sec:Appendix:tightness:GO}

While all the ingredients in the proof of 
Theorem~\ref{thm:tightness} are already  
present in~\cite{dupuis2011uq,katsoulakisuq16},
(see in particular \cite{katsoulakisuq16}[Theorem 2.9]), 
its formulation is new and we provide here a proof for completeness.  
Theorem~\ref{thm:tightness} follows immediately from following lemma

\begin{lemma}
Let $P$ be a probability measure and 
$f$ to be a non-constant function such 
that its moment generating 
function $\Mp{c;\ti f}$ is finite in a neighborhood of $0$. Let $Q$ be such that $\R QP=M$.  
\begin{enumerate}
\item For any $M \ge0$ the optimization problems
\begin{equation}
\go Q P {\pm f} =\inf_{c>0} \frac{\log M_P(\pm c; \ti f)+M}{c}
\end{equation}
have unique minimizers $c^{\pm}\in[0, + \infty]$. Moreover there exists
$0< M_{\pm}\le\infty$ such that the minimizers $c^{\pm}=c^{\pm}(M)$
are finite for $M \le M_{\pm}$ and $c^{\pm}(M) = + \infty$ if $M >
M_{\pm}$.

\item If $c^\pm=c^{\pm}(M)$ is finite
\begin{equation}
\label{eq:Bpmeqn}
\begin{aligned}
\go Q P {\pm f}  = \frac{\log M_P(\pm c^\pm; \ti f)+M}{c^\pm}
=\frac{d\log M_P}{dc}(\pm c^{\pm}; \ti f)  
=\pm(\E_{P^{\pm c^{\pm}}}[f]- \E_P[f])\,,
%
%
\end{aligned}
\end{equation}
where $c^{\pm}(M)$ is strictly increasing in $M$ and is determined by the
equation
\begin{equation}
\R  {P^{\pm c^{\pm}}}{P}
=M\,.\label{eq:cond:re}%
\end{equation}

\item If $M_{\pm}<\infty$ then $f$ is necessarily $P$ almost surely bounded above/bounded below respectively  with upper/lower bound $f_\pm$.
For $M>M_{\pm}$ we have that $c^{\pm}(M)=+\infty$ and 
\begin{equation}
\go Q P {\pm f}
= \pm (f_\pm - E_P[f])
\,.
\label{eq:fplus}
\end{equation}

\end{enumerate}
\end{lemma}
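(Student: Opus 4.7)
Write $\phi(c) := \log M_P(c; \tilde f)$, the cumulant generating function of the centered QoI. By the standing assumption, $\phi$ is $C^\infty$ on $I = (d_-, d_+)$, with $\phi(0) = 0$, $\phi'(0) = \mathbb{E}_P[\tilde f] = 0$, and $\phi''(c) = \mathrm{Var}_{P^c}[\tilde f] > 0$ since $f$ is non-constant $P$-a.s. The plan is to analyze the objective $F(c) := (\phi(c) + M)/c$ on $(0, d_+)$ by the usual calculus of critical points, then identify the resulting equation with $R(P^c\|P) = M$ using the exponential-family formula, and finally handle the boundary behavior as $c$ approaches $d_+$ (finite or infinite).

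For part 1 and 2, I would differentiate: $F'(c) = (c\phi'(c) - \phi(c) - M)/c^2$, so critical points satisfy $G(c) = M$ where $G(c) := c\phi'(c) - \phi(c)$. Since $G'(c) = c\phi''(c) > 0$ on $(0, d_+)$ and $G(0) = 0$, $G$ is strictly increasing. Set $M_+ := \lim_{c \to d_+^-} G(c) \in (0, \infty]$. For $M \le M_+$ there is a unique $c^+(M) \in (0, d_+)$ solving $G(c^+) = M$, and since $F(c) \to +\infty$ as $c \to 0^+$ (because $\phi(c)/c \to \phi'(0) = 0$ while $M/c \to \infty$), this critical point is the global minimizer on $(0, d_+)$. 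Strict monotonicity of $c^+(M)$ in $M$ follows immediately from strict monotonicity of $G$. The key identification is a direct calculation from~\eqref{eq:parametric}:
\begin{equation*}
R(P^c \| P) = c\,\mathbb{E}_{P^c}[f] - \log M_P(c;f) = c\phi'(c) - \phi(c) = G(c),
\end{equation*}
so the critical-point equation becomes exactly~\eqref{eq:cond:re}. Substituting $G(c^+) = M$ into $F$ gives $F(c^+) = \phi(c^+)/c^+ + G(c^+)/c^+ = \phi'(c^+) = \mathbb{E}_{P^{c^+}}[\tilde f]$, establishing the chain~\eqref{eq:Bpmeqn}. The minus-sign statement follows by applying the same argument to $-f$, with $M_-$ defined by the analogous supremum on $(d_-, 0)$.

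For part 3, assume $M_+ < \infty$ and $M > M_+$. I want to show $f$ is essentially bounded above and that the infimum of $F$ equals $f_+ - \mathbb{E}_P[f]$, attained in the limit $c \to d_+^-$. The argument for boundedness: if $f$ were $P$-essentially unbounded above, then as $c$ grows, the tilted measure $P^c$ would concentrate on sets where $f$ is arbitrarily large, forcing $\phi'(c) = \mathbb{E}_{P^c}[\tilde f] \to \infty$, and hence $G(c) = c\phi'(c) - \phi(c)$ would be unbounded (the convexity-based inequality $\phi(c) \le c\phi'(c) - G(0) = c\phi'(c)$ makes $G \to \infty$), contradicting $M_+ < \infty$; in particular this also forces $d_+ = \infty$ whenever $M_+ < \infty$. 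Granted $f \le f_+$ $P$-a.s., one then computes $\lim_{c \to \infty} F(c) = \lim_{c \to \infty} \phi(c)/c = f_+ - \mathbb{E}_P[f]$ by a standard Laplace/Varadhan argument on $M_P(c;\tilde f) = \mathbb{E}_P[e^{c(f - \mathbb{E}_P[f])}]$, yielding~\eqref{eq:fplus}. Since $G$ is strictly increasing and $G(c) \le M_+ < M$ for all $c \in (0,\infty)$, no interior critical point exists and the infimum is approached only as $c \to \infty$, justifying the convention $c^+(M) = +\infty$.

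The main obstacle I anticipate is the boundary analysis in part 3: rigorously showing that $M_+ < \infty$ forces boundedness of $f$, and then extracting the Laplace asymptotic $\phi(c)/c \to f_+$ cleanly in the generality allowed by the hypotheses (no assumption that $\{f = f_+\}$ has positive $P$-measure). The other steps are essentially convex analysis combined with direct substitution, so the delicate point is controlling the limiting behavior as $c \to d_+^-$ when the interval of finiteness degenerates or the essential supremum is attained only as a limit.
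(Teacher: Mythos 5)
Your plan follows the paper's proof essentially step for step: same objective $F(c)=(\phi(c)+M)/c$, same critical-point equation $c\phi'(c)-\phi(c)=M$ via the strictly increasing function $G(c)=c\phi'(c)-\phi(c)=R(P^c\|P)$, same substitution giving $F(c^{\pm})=\pm(\E_{P^{\pm c^{\pm}}}[f]-\E_P[f])$, and the same boundary analysis at $d_+$ (the paper calls your $F$ and $G$ by the names $B(c;M)$ and $cH'(c)-H(c)$). Parts 1 and 2 are correct as written.

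The one step that would fail as justified is in part 3: the inequality $\phi(c)\le c\phi'(c)$ only yields $G(c)\ge 0$, which does not show $G(c)\to\infty$ when $f$ is unbounded above. The repair is the supporting-line bound at a \emph{fixed} interior point $c_0$: convexity gives $\phi(c_0)\ge \phi(c)+\phi'(c)(c_0-c)$, hence $G(c)=c\phi'(c)-\phi(c)\ge c_0\phi'(c)-\phi(c_0)\to\infty$ once $\phi'(c)\to\infty$. (Equivalently, the paper avoids proving $G\to\infty$ directly and instead notes that $c^{-1}\phi(c)=c^{-1}\int_0^c\phi'(t)\,dt\to\lim_{c\to d_+}\phi'(c)=+\infty$ for unbounded $f$, so $F\to\infty$ at both endpoints and an interior minimizer exists; uniqueness then comes from monotonicity of $G$.) With that correction, and the computation $\lim_{c\to\infty}\phi(c)/c=f_+-\E_P[f]$ for bounded $f$ that you and the paper both invoke, your part 3 matches the paper's case (b).
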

\begin{proof}
For notational ease, in the proof, let 
us set $H(c)=\log M_P(c; \ti f)$ 
and note that since $\ti f$ is 
centered we have $H(0)=H'(0)=0$. 
We have $H'(c)= \E_{P^c}[f]-\E_p[f]$ and
$H''(c)=\mathrm{Var}_{P^c}(f) >0$ since 
$f$ is not constant $P$ almost surely 

If $d_+ < \infty$ then we have 
$\lim_{c \to d_+} H(c)= \infty$ and 
$\lim_{c \to d_+} H'(c)= \infty$. 
If $d_+ = \infty$ then 
\begin{equation}
\lim_{c \to \infty} H'(c) = \left\{ 
\begin{array}{cl} f_+ - \E_P[f]& \textrm{ 
if }  f \textrm{ is bounded} \\
+\infty & \textrm{ otherwise} 
\end{array}
\right. \,.
\end{equation}
Since $c^{-1}H(c) = c^{-1}\int_0^c 
H'(t) dt$ and $H'(c)$ is strictly 
increasing $c^{-1} H(c)$ is a strictly 
increasing function and we have 
$\lim_{c\to \infty} c^{-1} H(c) = 
\lim_{c\to \infty} H'(c)$ 
which is finite if only if $f$ is bounded.   
Let us set 
\[
B(c; M)=\frac{\log M_P(c; f)+M} {c}=\frac{H(c)+M}{c}
\]
and then distinguish two cases: 

\smallskip 
\noindent
(a) If $d_+ \le \infty$ or if $d_+=\infty$ and $f$ is unbounded then we have $\lim_{c \to 0} B(c; M)= \lim_{c \to d_+} B(c; M) = +\infty$ \
and thus $B(c; M)$ has at least one 
minimum for some $0< c < d_+$. 
By calculus the minimum must be a solution of 
\[
0 = \frac{\partial}{\partial c} B(c;M)= \frac{c H'(c)- H(c) - M }{c^2}
\]
that is me must have $c H'(c)- H(c)= M$. 
Since $\frac{\partial}{\partial c}(c H'(c)- H(c)) = 
cH''(c)> 0$ the function $c H'(c)- H(c)$ is 
strictly increasing and thus there is a 
unique minimizer $c_+$ for $B(c;M)$.    

\smallskip 
\noindent
(b) If $d_+ =\infty$ but $f$ is bounded, since $c H'(c)- H(c)$ is strictly increasing we have $\lim_{c \to \infty} 
cH'(c)- H(c) =M_+$ which may or may not be finite depending on $P$. If $M \le M_+$ we can proceed as in (a) to find a unique minimizer for a finite $c_+$,  while if 
$M> M_+$, $B(c;M)$ is strictly decreasing and thus the minimizer is attained at $c_+=\infty$.   

To conclude the proof we note that if 
$c_+ < \infty$ then  
$c_+ H'(c_+)- H(c_+) =M$
and thus 
\[
B(c_+, M) = H'(c_+)
\]
which proves \eqref{eq:Bpmeqn}. On the other hand a simple computation shows that
for any $c$ 
\[
\R {P^c} P =  c H'(c)- H(c)
\]
and this establishes \eqref{eq:cond:re}. 
Finally if $c_+=\infty$ the infimum is equal to $\lim_{c\to \infty}\frac{H(c)}{c}$ and this establishes \eqref{eq:fplus}.

\end{proof}

\section*{Acknowledgment}
The research of KG and MAK  was partially supported by the Office of Advanced Scientific Computing Research, U.S. Department of Energy, under Contract No. DE-SC0010723.
The research of MAK and LRB was partially supported by the National Science Foundation (NSF) under the grant DMS-1515712. 
The research of JW was partially supported by the Defense Advanced Research Projects Agency (DARPA) EQUiPS program under the grant  W911NF1520122.

\ifCLASSOPTIONcaptionsoff
  \newpage
\fi



\bibliographystyle{IEEEtran}
\bibliography{commonRefs}

\begin{thebibliography}{10}
\providecommand{\url}[1]{#1}
\csname url@rmstyle\endcsname
\providecommand{\newblock}{\relax}
\providecommand{\bibinfo}[2]{#2}
\providecommand\BIBentrySTDinterwordspacing{\spaceskip=0pt\relax}
\providecommand\BIBentryALTinterwordstretchfactor{4}
\providecommand\BIBentryALTinterwordspacing{\spaceskip=\fontdimen2\font plus
\BIBentryALTinterwordstretchfactor\fontdimen3\font minus
  \fontdimen4\font\relax}
\providecommand\BIBforeignlanguage[2]{{%
\expandafter\ifx\csname l@#1\endcsname\relax
\typeout{** WARNING: IEEEtran.bst: No hyphenation pattern has been}%
\typeout{** loaded for the language `#1'. Using the pattern for}%
\typeout{** the default language instead.}%
\else
\language=\csname l@#1\endcsname
\fi
#2}}

\bibitem{Dakota}
M.~S. Eldred, B.~M. Adams, D.~M. Gay, L.~P. Swiler, K.~Haskell, W.~J. Bohnhoff,
  J.~P. Eddy, W.~E. Hart, J.~paul Watson, P.~D. Hough, and T.~G. Kolda,
  ``Dakota, a multilevel parallel object-oriented framework for design
  optimization, parameter estimation, uncertainty quantification, and
  sensitivity analysis: version 5.0 user’s manual,'' Tech. Rep.
  SAND2010-2183, Sandia National Laboratories, Tech. Rep., Dec. 2009.

\bibitem{Smith}
R.~C. Smith, \emph{Uncertainty quantification: theory, implementation, and
  applications}.\hskip 1em plus 0.5em minus 0.4em\relax Society for Industrial
  and Applied Mathematics, 2013.

\bibitem{Saltelli:global}
A.~Saltelli, M.~Ratto, T.~Andres, F.~Campolongo, J.~Cariboni, D.~Gatelli,
  M.~Saisana, and S.~Tarantola, \emph{Global sensitivity analysis: the
  primer}.\hskip 1em plus 0.5em minus 0.4em\relax John Wiley \& Sons, 2008.

\bibitem{dupuis2011uq}
K.~Chowdhary and P.~Dupuis, ``Distinguishing and integrating aleatoric and
  epistemic variation in uncertainty quantification,'' \emph{ESAIM:
  Mathematical Modelling and Numerical Analysis}, vol.~47, no.~03, pp.
  635--662, 2013.

\bibitem{Sullivan2015}
T.~J. Sullivan, \emph{Introduction to uncertainty quantification}.\hskip 1em
  plus 0.5em minus 0.4em\relax Springer, 2015.

\bibitem{tsybakov}
A.~B. Tsybakov, \emph{{Introduction to nonparametric estimation}}.\hskip 1em
  plus 0.5em minus 0.4em\relax Springer Science \& Business Media, 2008.

\bibitem{bishop}
C.~M. Bishop, \emph{Pattern recognition and machine learning}.\hskip 1em plus
  0.5em minus 0.4em\relax Springer-Verlag New York, Inc., 2006.

\bibitem{burnham}
K.~P. Burnham and D.~R. Anderson, \emph{Model selection and multimodel
  inference: a practical information-theoretic approach}.\hskip 1em plus 0.5em
  minus 0.4em\relax Springer Science \& Business Media, 2003.

\bibitem{Shell}
A.~Chaimovich and M.~S. Shell, ``Relative entropy as a universal metric for
  multiscale errors,'' \emph{Physical Review E}, vol.~81, no.~6, p. 060104,
  2010.

\bibitem{KP2013}
M.~A. Katsoulakis and P.~Plech{\'a}{\v{c}}, ``Information-theoretic tools for
  parametrized coarse-graining of non-equilibrium extended systems,'' \emph{The
  Journal of chemical physics}, vol. 139, no.~7, p. 074115, 2013.

\bibitem{Noid}
J.~F. Rudzinski and W.~G. Noid, ``Coarse-graining, entropy, forces and
  structures,'' \emph{Journal of Chemical Physics}, vol. 135, no.~21, 2011.

\bibitem{majdabook}
A.~J. Majda, R.~V. Abramov, and M.~J. Grote, \emph{{Information theory and
  stochastics for multiscale nonlinear systems}}, ser. {CRM monograph
  series}.\hskip 1em plus 0.5em minus 0.4em\relax Providence, R.I. American
  Mathematical Society, 2005.

\bibitem{Atkinson2007OptimumExperimentalDesignsWithSAS.}
A.~Atkinson, A.~Doney, and R.~Tobias, \emph{Optimum experimental designs, with
  SAS}.\hskip 1em plus 0.5em minus 0.4em\relax Oxford University Press, May
  2007, vol. (2007). Oxford science publications. Oxford.

\bibitem{majdauncertainty}
A.~J. Majda and B.~Gershgorin, ``{Quantifying uncertainty in climate change
  science through empirical information theory},'' \emph{Proceedings of the
  National Academy of Sciences}, vol. 107, no.~34, pp. 14\,958--14\,963, 2010.

\bibitem{majdafidelity}
------, ``Improving model fidelity and sensitivity for complex systems through
  empirical information theory,'' \emph{Proceedings of the National Academy of
  Sciences}, vol. 108, no.~25, pp. 10\,044--10\,049, 2011.

\bibitem{Komorowski}
M.~Komorowski, M.~J. Costa, D.~A. Rand, and M.~P. Stumpf, ``{Sensitivity,
  robustness, and identifiability in stochastic chemical kinetics models},''
  \emph{Proceedings of the National Academy of Sciences}, vol. 108, no.~21, pp.
  8645--8650, 2011.

\bibitem{PK2013}
Y.~Pantazis and M.~A. Katsoulakis, ``{A relative entropy rate method for path
  space sensitivity analysis of stationary complex stochastic dynamics},''
  \emph{The Journal of chemical physics}, vol. 138, no.~5, p. 054115, 2013.

\bibitem{Cover2006ElementsofInformationTheory}
T.~M. Cover and J.~A. Thomas, \emph{Elements of information theory}.\hskip 1em
  plus 0.5em minus 0.4em\relax Wiley-Interscience, July 2006.

\bibitem{katsoulakisjcp2017}
M.~A. Katsoulakis, L.~Rey-Bellet, and J.~Wang, ``Scalable information
  inequalities for uncertainty quantification,'' \emph{Journal of Computational
  Physics}, vol. 336, pp. 513--545, 2017.

\bibitem{katsoulakisuq16}
P.~Dupuis, M.~A. Katsoulakis, Y.~Pantazis, and P.~Plechac, ``Path-space
  information bounds for uncertainty quantification and sensitivity analysis of
  stochastic dynamics,'' \emph{SIAM/ASA Journal on Uncertainty Quantification},
  vol.~4, no.~1, pp. 80--111, 2016.

\bibitem{dupuisellis}
P.~G. Dupuis and R.~S. Ellis, \emph{A weak convergence approach to the theory
  of large deviations}, ser. Wiley series in probability and statistics.\hskip
  1em plus 0.5em minus 0.4em\relax New York: Wiley, 1997.

\bibitem{simon2014statistical}
B.~Simon, \emph{The statistical mechanics of lattice gases}.\hskip 1em plus
  0.5em minus 0.4em\relax Princeton University Press, 2014.

\bibitem{glasserman2014}
P.~Glasserman and X.~Xu, ``Robust risk measurement and model risk,''
  \emph{Quantitative Finance}, vol.~14, no.~1, pp. 29--58, 2014.

\bibitem{lam2016}
H.~Lam, ``Robust sensitivity analysis for stochastic systems,''
  \emph{Mathematics of Operations Research}, vol.~41, no.~4, pp. 1248--1275,
  Nov. 2016.

\bibitem{junliu}
J.~S. Liu, \emph{{Monte carlo strategies in scientific computing}}, ser.
  {Springer Series in Statistics}.\hskip 1em plus 0.5em minus 0.4em\relax New
  York: Springer-Verlag, 2001.

\bibitem{lelievre2010free}
T.~Leli{\`e}vre, G.~Stoltz, and M.~Rousset, \emph{Free energy computations: A
  mathematical perspective}.\hskip 1em plus 0.5em minus 0.4em\relax World
  Scientific, 2010.

\bibitem{delmoral2012}
P.~Del~Moral, A.~Doucet, and A.~Jasra, ``On adaptive resampling strategies for
  sequential monte carlo methods,'' \emph{Bernoulli}, vol.~18, no.~1, pp.
  252--278, Feb. 2012.

\bibitem{concentration}
S.~Boucheron, G.~Lugosi, P.~Massart, and M.~Ledoux, \emph{Concentration
  inequalities : a nonasymptotic theory of independence}.\hskip 1em plus 0.5em
  minus 0.4em\relax Oxford: Oxford University Press, 2013.

\bibitem{raginsky2013concentration}
M.~Raginsky, I.~Sason, \emph{et~al.}, ``Concentration of measure inequalities
  in information theory, communications, and coding,'' \emph{Foundations and
  Trends{\textregistered} in Communications and Information Theory}, vol.~10,
  no. 1-2, pp. 1--246, 2013.

\bibitem{ledoux2005concentration}
M.~Ledoux, \emph{The concentration of measure phenomenon}.\hskip 1em plus 0.5em
  minus 0.4em\relax American Mathematical Society, 2005.

\bibitem{dembo2010large}
A.~Dembo and O.~Zeitouni, ``Large deviations techniques and applications,''
  2010.

\bibitem{massart2007concentration}
P.~Massart, \emph{Concentration inequalities and model selection}.\hskip 1em
  plus 0.5em minus 0.4em\relax Springer, 2007.

\bibitem{tropp2015introduction}
J.~A. Tropp, ``An introduction to matrix concentration inequalities,''
  \emph{Foundations and Trends{\textregistered} in Machine Learning}, vol.~8,
  no. 1-2, pp. 1--230, 2015.

\bibitem{wasserman2004}
L.~Wasserman, \emph{All of statistics: a concise course in statistical
  inference}.\hskip 1em plus 0.5em minus 0.4em\relax Springer-Verlag, New York,
  2004.

\bibitem{bobkov1999exponential}
S.~G. Bobkov and F.~G{\"o}tze, ``Exponential integrability and transportation
  cost related to logarithmic sobolev inequalities,'' \emph{Journal of
  Functional Analysis}, vol. 163, no.~1, pp. 1--28, 1999.

\bibitem{hoeffding1963probability}
W.~Hoeffding, ``Probability inequalities for sums of bounded random
  variables,'' \emph{Journal of the American statistical association}, vol.~58,
  no. 301, pp. 13--30, 1963.

\bibitem{villani2008optimal}
C.~Villani, \emph{Optimal transport: old and new}.\hskip 1em plus 0.5em minus
  0.4em\relax Springer Science \& Business Media, 2008.

\bibitem{Carlier2010}
G.~Carlier, A.~Galichon, and F.~Santambrogio, ``From knothe's transport to
  brenier's map and a continuation method for optimal transport,'' \emph{SIAM
  Journal on Mathematical Analysis}, vol.~41, no.~6, pp. 2554--2576, 2010.

\bibitem{mcdiarmid1989method}
C.~McDiarmid, ``On the method of bounded differences,'' \emph{Surveys in
  combinatorics}, vol. 141, no.~1, pp. 148--188, 1989.

\bibitem{wasserman2010}
L.~Wasserman, \emph{All of nonparametric statistics}.\hskip 1em plus 0.5em
  minus 0.4em\relax Springer-Verlag New York, 2006.

\bibitem{boucheron2009concentration}
S.~Boucheron, G.~Lugosi, P.~Massart, \emph{et~al.}, ``On concentration of
  self-bounding functions,'' \emph{Electronic Journal of Probability}, vol.~14,
  no.~64, pp. 1884--1899, 2009.

\bibitem{eom2007life}
S.-W. Eom, M.-K. Kim, I.-J. Kim, S.-I. Moon, Y.-K. Sun, and H.-S. Kim, ``Life
  prediction and reliability assessment of lithium secondary batteries,''
  \emph{Journal of Power Sources}, vol. 174, no.~2, pp. 954--958, 2007.

\bibitem{weibulldef}
A.~Papoulis and S.~U. Pillai, \emph{Probability, random variables, and
  stochastic processes}.\hskip 1em plus 0.5em minus 0.4em\relax Tata
  McGraw-Hill Education, 2002.

\bibitem{MacKay:2003}
D.~J.~C. MacKay, \emph{Information theory, inference and learning
  algorithms}.\hskip 1em plus 0.5em minus 0.4em\relax Cambridge University
  Press, 2003.

\bibitem{li2012computation}
J.~Li and D.~Xiu, ``Computation of failure probability subject to epistemic
  uncertainty,'' \emph{SIAM Journal on Scientific Computing}, vol.~34, no.~6,
  pp. A2946--A2964, 2012.

\bibitem{Ortiz}
J.~Marian, G.~Venturini, B.~L. Hansen, J.~Knap, M.~Ortiz, and G.~H. Campbell,
  ``Finite-temperature extension of the quasicontinuum method using langevin
  dynamics: entropy losses and analysis of errors,'' \emph{Modelling and
  Simulation in Materials Science and Engineering}, vol.~18, no.~1, p. 015003,
  2010.

\bibitem{KP2016}
V.~Harmandaris, E.~Kalligiannaki, M.~Katsoulakis, and P.~Plech{\'a}\v{c},
  ``Path-space variational inference for non-equilibrium coarse-grained
  systems,'' \emph{Journal of Computational Physics}, vol. 314, no.~C, pp.
  355--383, June 2016.

\end{thebibliography}
\end{document}